\newtheorem{theorem}{Theorem}
\newtheorem{remarks}{Remarks}
\newtheorem{lemma}{Lemma}
\newtheorem{proposition}{Proposition}
\newtheorem{corollary}{Corollary}
\newtheorem{definition}{Definition}
\newcommand{\inner}[2]{{\langle#1,#2\rangle}}
\begin{document}
\title{Subspace Recovery  from  Structured Union of Subspaces}
\author{\authorblockA{Thakshila Wimalajeewa$^{\dag}$, \emph{Member IEEE}, Yonina C. Eldar$^{\dag\dag}$, \emph{Fellow IEEE},  and Pramod K.
Varshney$^{\dag}$, \emph{Fellow IEEE}}
}

\maketitle \thispagestyle{empty}

\begin{abstract}
Lower dimensional signal representation schemes  frequently  assume that the signal of interest  lies in a single vector space. In  the context of the recently developed theory of compressive  sensing (CS), it is often assumed that the signal of interest is sparse in an orthonormal basis.  However, in many practical applications,  this requirement may be too restrictive. A generalization of the standard sparsity assumption is that the signal lies in a union of subspaces. Recovery of such signals from a small number of samples has been studied recently in several works. Here, we consider the problem of subspace recovery  in which our goal is to identify the subspace (from the union) in which the signal lies using  a small number of samples, in the presence of noise.
More specifically, we derive performance bounds and  conditions under which  reliable subspace  recovery   is guaranteed using maximum likelihood (ML) estimation.  We begin by treating general unions and then obtain  the results for  the special case in which the subspaces  have structure leading to block sparsity. In our analysis, we treat both general sampling operators  and random sampling matrices. With general unions, we show that under certain conditions, the number of measurements required for reliable subspace recovery  in the presence of noise via ML   is less than that implied using the restricted isometry property which guarantees signal recovery. In the special case of  block sparse signals,  we quantify the gain achievable  over standard sparsity in subspace recovery. Our results also strengthen existing results on  sparse support  recovery in the presence of noise under  the standard sparsity model.

\end{abstract}

{\bf\emph{ Index terms}- Maximum likelihood estimation, union of linear subspaces, subspace recovery,   compressive sensing,  block sparsity }

\footnotetext[1]{$^{\dag}$Dept. of Electrical Engineering  and Computer Science,
Syracuse University, Syracuse, NY 13244\\
$^{\dag\dag}$Dept. of Electrical Engineering, Technion-Israel Institute of Technology, Technion City, Haifa 32000, Israel\\
Email: twwewelw@syr.edu, yonina@ee.technion.ac.il, varshney@syr.edu}
\footnotetext[2]{The work of T. Wimalajeewa and P. K. Varshney  was supported by the National Science
Foundation (NSF) under Grant No. 1307775. The work of Y. C. Eldar was supported in part by the Israel Science Foundation under Grant no. 170/10, in part by the SRC, in part by the Ollendorf Foundation, and in part by the Intel Collaborative Research Institute for Computational Intelligence (ICRI-CI).}

\section{Introduction}
The compressive  sensing (CS) framework has established that a small number of measurements acquired via random projections are  sufficient for signal recovery when the signal of interest is sparse in a certain basis.  Consider a length-$N$  signal  $\mathbf x$ which  can be represented in a basis $\mathbf V$ such that $\mathbf x = \mathbf V \mathbf c$. The signal $\mathbf x$ is said to be $k$-sparse in the basis $\mathbf V$ if  $\mathbf c$ has only $k $ nonzero coefficients where $k$ is much smaller than $N$. It  has been shown in \cite{candes1,donoho1,candes2} that $\mathcal O( k \log (N/k))$  compressive measurements are sufficient to recover $\mathbf x$
when the elements of the measurement matrix  are random. Signal recovery  can be performed via  optimization or greedy based approaches. A detailed overview of CS can be found in \cite{Eldar_B1}.

There are  a variety of  applications in which complete signal recovery is not necessary.  The problem of
sparse support   recovery (equivalently sparsity pattern recovery or  finding the locations of nonzero coefficients  of a sparse signal)  arises in a wide variety of areas including source localization \cite{Malioutov1,Cevher1},
sparse approximation \cite{Natarajan1}, subset selection in linear regression \cite{Miller1,Larsson1}, estimation of frequency band locations in cognitive radio networks \cite{Tian1,Mishali2,Mishali3},
and signal denoising \cite{SSChen1}. In these applications, often finding the
sparsity pattern  of the signal is more important than approximating the signal itself. Further, in the CS framework, once the  sparse support is  identified, the signal can be estimated using standard techniques. For  the problem of complete sparse signal recovery, there is a significant amount of work in the literature that focuses on deriving  recovery guarantees and stability with respect to various $l_q$ norms of the reconstruction error. However, as pointed out in \cite{wain2}, recovery guarantees derived for sparse signals do not always imply  exact recovery of the sparse support. The criteria used for sparse support  recovery  and exact signal recovery  are generally  different. Although a signal estimate can be close to the original sparse signal, the estimated signal  may  have a different support compared to the true signal support   \cite{wain2}.   For example, Lasso has been shown to be information theoretically optimal in certain regimes of the signal-to-noise ratio (SNR) for sparse support recovery, while in other regimes of SNR, the Lasso fails with high probability in recovering the sparsity pattern \cite{wain2,Wainwright4}.  Thus,  investigation of  recovery conditions for sparse support   at any given SNR  is an important problem. Performance  limits on
reliable recovery of the sparsity pattern
have been derived by several authors in  recent
work exploiting
information theoretic tools \cite{wain2,wang5,Fletcher1,Akcakaya1,Reeves1, tang1,goyal1,thakshilaj4}.
Most of these works focus on deriving  necessary and sufficient conditions
for reliable sparsity pattern  recovery assuming  the  standard sparsity model.

There are practical scenarios where structured properties of the signal are available. Reduced dimensional signal processing for  several signal models which go beyond  simple sparsity  has been treated  in recent literature \cite{Baraniuk4,Lu1,Eldar1,Blumensath1,Eldar3,Duarte2}.
One general model that can describe many structured problems is that of a union of subspaces. In this setting, the signal is known to lie in one out of a possible set of subspaces but the specific subspace chosen is unknown. Examples include  wideband spectrum sensing \cite{Mishali2},  time delay estimation with overlapping echoes \cite{Bruchstein1,Lu1,Gedalyahu1},  and signals having finite rate innovation \cite{Ben1,Dragotti1}.
Conditions under which stable sampling and recovery is possible in a general union of subspaces model
are derived in \cite{Lu1,Eldar1,Blumensath1,Baraniuk4}.  However, the problem of recovering the subspace in which the signal lies without completely recovering the signal (or the problem of subspace recovery) has not been treated  in this more general setting.

In this paper, our goal is to investigate the problem of subspace recovery in  the union of subspaces model with a given sampling operator.  We  consider  subspace recovery based on the optimal ML decoding scheme in the presence of noise. While ML  is computationally intractable as the signal dimension increases, the analysis gives a benchmark for the optimal performance that is achievable with any practical algorithm.   We derive performance  in terms of  probability of error of the ML decoder   when  sampling is performed via an arbitrary  linear sampling operator.  Based on an upper bound on the probability of error, we derive the minimum number of samples required for asymptotically  reliable recovery  of subspaces in terms of a SNR measure, the dimension of each subspace  in the union and a term which quantifies the dependence or overlap   among the subspaces.
In the special case where sampling is performed via random projections and the subspaces in the union have a specific structure such that each subspace is a sum of some other  $k_0$ subspaces, we obtain a more explicit expression for the minimum number of measurements. This number depends on the number of underlying subspaces, the dimension of  each subspace, and the minimum nonzero  block SNR (defined in Section \ref{random_block}). We note that the conventional sparsity model is a special case of this structure.

The asymptotic probability of error  of the ML decoder for sparse support recovery   in  the presence of noise for the standard sparsity model  was first investigated in  \cite{wain2} followed by several other authors  \cite{Fletcher1,wang5,thakshilaj4}. In \cite{wain2},  sufficient conditions were derived on  the number of noisy compressive measurements needed to achieve  a vanishing probability of error asymptotically for sparsity pattern recovery while    necessary conditions were considered in \cite{Fletcher1}.  The analyses in both \cite{wain2} and \cite{Fletcher1} are  based on the assumption that the sampling operator is random. Here, we follow a similar path  assuming  the union of subspaces model. However, there are some key differences  between  our derivations  and  that in \cite{wain2}. First, we treat arbitrary (not necessarily random)  sampling operators  and assume a  general union of subspaces model as opposed to  the standard sparsity model.  Further, the  results in \cite{wain2}  were derived based on weak bounds on the probability of error, thus there is a gap between those results and the number of measurements required for the exact probability of error to vanish asymptotically at finite SNR. We consider tighter bounds on the probability error leading to tighter results.

The rest of the paper is organized as follows. In Section  \ref{formulation}, the problem of subspace recovery   from  a union of subspace model is introduced. In Section \ref{ML_decoder}, performance limits  with ML  decoder for subspace recovery   in terms of the probability of error are derived with a given linear sampling operator considering a general union of subspaces model. Conditions under which asymptotically  reliable subspace recovery  in the presence of noise is guaranteed   are obtained based on the derived upper bound. The results are extended in Section \ref{structured} to the setting where structured properties of the subspaces in the union are available.  We also derive sufficient conditions for subspace recovery  when sampling is performed via random projections. In Section \ref{discussion}, we compare our results with some existing results in the literature.  Practical algorithms  to recover  subspaces in the union of subspace model and  numerical results to validate the theoretical claims are presented in Section  \ref{numerical}.

Throughout the paper, we use the following notation. Arbitrary vectors in a Hilbert space  $\mathcal H$, are denoted by lower case letters, e.g., $x$. Calligraphic letters, e.g., $\mathcal S$, are used to represent subspaces in $\mathcal H$.  Vectors in $\mathbb R^N$ are written in boldface lower case letters, e.g. $\mathbf x$. Scalars (in $\mathbb R$) are also denoted by lower case letters, e.g., $x$, when there is no confusion.  Matrices are written in boldface upper case letters, e.g., $\mathbf A$. Linear operators and  a set of basis vectors for a given subspace $\mathcal S$ are denoted by  upper case letters, e.g., A. The notation  $\mathbf x \sim \mathcal N (\boldsymbol\mu, \Sigma)$ means  that the random vector $\mathbf x$ is distributed as multivariate  Gaussian with mean $\boldsymbol\mu$ and the covariance matrix $\Sigma$; $x\sim \mathcal X_m^2(\lambda)$ denotes that the random variable $x$ is distributed as Chi squared with $m$ degrees of freedom and  non centrality parameter $\lambda$. (The central Chi squared distribution is denoted by $\mathcal X_m^2$).    By $\mathbf 0$, we denote  a vector with appropriate dimension in which all elements are zeros, and $\mathbf I_k$ is  the identity matrix of size $k$.  The conjugate transpose of a matrix $\mathbf A$ is denoted by $\mathbf A^*$. Finally,  $||.||_2$ denotes the $l_2$ norm and $|.|$ is used for  both the cardinality (of a set) and the absolute value (of a scalar).  Special functions used in the paper are:  Gaussian $Q$-function:
\begin{eqnarray}
Q(x) = \frac{1}{\sqrt{2\pi}} \int_x^{\infty} e^{-\frac{t^2}{2}} dt\label{q_func}
\end{eqnarray}
Gamma function:
\begin{eqnarray}
\Gamma(x) = \int_0^{\infty} t^{x-1} e^{-t} dt\label{gamma_func}
\end{eqnarray} and
 modified Bessel function with real arguments:
 \begin{eqnarray}
 K_\nu(x) = \int_0^{\infty} e^{-x \mathrm{cosh} t} \mathrm{cosh} (\nu t) dt.\label{bessel_func}
 \end{eqnarray}

\section{Problem Formulation}\label{formulation}

\subsection{Union of subspaces}
As discussed in \cite{Blumensath1,Lu1,Eldar1,Baraniuk4}, there are many practical scenarios where the signals of interest lie in a union of subspaces.

  \begin{definition}{Union of subspaces: }
  A signal $x\in \mathcal H$ lies in a   union of subspaces if $x \in \mathcal X$ where $\mathcal X$ is defined as
  \begin{eqnarray}
   \mathcal X = \underset{i} {\bigcup} ~ \mathcal S_i\label{union}
    \end{eqnarray}
    and $\mathcal S_i$'s are  subspaces of $\mathcal H$ which are assumed to be finite dimensional.  A signal $x \in \mathcal X$ if and only if there exists $i_0$ such that $x \in \mathcal S_{i_0}$.
  \end{definition}

  Let $V_i = \{ v_{im}\}_{m=0}^{k-1}$ be a basis for the finite dimensional subspace $\mathcal S_i$ where $k$ is the dimension of  $\mathcal S_i$ (it is noted that while we assume all subspaces to  have the same dimension, the analysis can be easily extended for the case where different subspaces have different dimensions). Then each  $x\in  \mathcal S_i$ can be  expressed in terms of a  basis expansion
  \begin{eqnarray*}
  x = \sum_{m=0}^{k-1} c_{i}(m) v_{im}
  \end{eqnarray*}
  where $c_{i}(m)$'s  for $m=0,1,\cdots k -1$ are the coefficients corresponding to  the basis $V_i$. We assume that the  subspaces are distinct  (i.e. there are no  subspaces such  that $\mathcal S_i \subseteq \mathcal S_j$ for $i\neq j$ in the union (\ref{union})) and each subspace $\mathcal S_i$ is uniquely determined by  the basis $V_i$. We denote by   $T < \infty$ the  number of  subspaces in the union $\mathcal X$.

\subsection{Structured  union of subspaces leading to block sparsity}\label{structured_block}
There  are certain scenarios in which the signals can be assumed to lie in more structured union of subspaces as considered in \cite{Eldar1,Haim2,Duarte2}.   Suppose that  each subspace in the union (\ref{union}) can be  represented as a sum of $k_0$ (out of $L$) disjoint subspaces \cite{Eldar1,Haim2}. More specifically,
\begin{eqnarray}
\mathcal S_i = \underset{j \in \Sigma_{k_0}}{\oplus} \mathcal V_j\label{subspace_addition}
\end{eqnarray}
where $\{\mathcal V_j\}_{j=0}^{L-1}$'s are disjoint subspaces, and  $\Sigma_{k_0}$ contains $k_0$ indices from $\{0,1,\cdots, L-1\}$. Let $d_j = \dim (\mathcal V_j)$ and $ N = \sum_{j=0}^{L-1} d_j$. Then there are $T={L\choose k_0}$  subspaces in the union. Under this formulation, the dimension of each subspace in  (\ref{union}) is $k= \underset{j \in \Sigma_{k_0}} {\sum} d_j$. In the special case where $d_j= d$ for all $j$,  $k=k_0d$.

Now taking  $V_j$ as a basis for $\mathcal V_j$,
a signal in the union can be written as
\begin{eqnarray}
x = \underset{j\in \Sigma_{k_0}}{\sum} V_j \mathbf c_j \label{subspace_1}
\end{eqnarray}
where $\mathbf c_j = [c_j(0), \cdots, c_j(d_j-1)]^T \in \mathbb{R}^{d_j}$ is a $d_j \times 1$ coefficient vector corresponding to the basis $V_j$. It is worth mentioning  that we use the same notation $V_j$ to denote a basis of the subspace $\mathcal S_j$ in (\ref{union}) for $j=0,1,\cdots, T-1$  (when discussing the general union of subspaces model) and  also to denote a basis of the subspace $\mathcal V_j$ in (\ref{subspace_addition}) for $j=0,1,\cdots, L-1$  (when discussing the structured union of subspace model). Let $V$ be a matrix constructed by concatenating  $V_i$'s column wise, such that $V=[V_0 | V_1 |\cdots | V_{L-1}]$ and $\mathbf c$ be a $N\times 1$ vector with $\mathbf c = [\mathbf c^T_{0}| \cdots |\mathbf c^T_{L-1}]^T$. As defined in \cite{Eldar1}, the vector $\mathbf c \in \mathbb R^N$ is called block $k$-sparse over $\mathcal I = \{d_0, d_1, \cdots, d_{L-1}\}$ if all the elements in $\mathbf c_i$ are zeros for all but $k_0$ indices where $N = \sum_{j=0}^{L-1} d_j$. In this paper, we assume $d_j =d $ for all $j$ so that $N=Ld$.

  The standard sparsity model used in the CS literature is  a special case of this structured union of subspaces  model when $d=1$. In the standard CS framework, $x= \mathbf x$ is a length-$N$ signal vector    which is  $k$-sparse in an $N$-dimensional  orthonormal basis $\mathbf V$  so that  $\mathbf x$ can be represented as $\mathbf x = \mathbf V  \mathbf c$ with $\mathbf c$ having only $k\ll N$ significant coefficients. This fits our framework when  $d=1$ and $V_j$ is chosen  as the $j$-th column vector of the orthonormal basis  $\mathbf V$ for $j=0,1,\cdots, N-1$.   In this case, we have $L=N$ and  there are  $T = {N\choose k}$ subspaces in the union.


  \subsection{Observation model: Linear sampling}

  Consider a sampling operator via a bounded linear mapping of a signal $x$ that lies in an ambient  Hilbert space $\mathcal H$. Let the linear sampling operator $A$ be specified by a set of unique sampling vectors  $\{a_m\}_{m=0}^{M-1}$.  With these notations, noisy samples are given by,
\begin{eqnarray}
\mathbf y = A x  + \mathbf{ w}\label{obs_1}
\end{eqnarray}
  where $\mathbf y$ is the $M\times 1$ measurement vector, and the $m$-th element of the vector $Ax$ is given by, $(Ax)_m = \langle x, a_m \rangle$ for $m=0,1,\cdots, M-1$  where $\langle., .\rangle$ denotes the inner product. The noise vector $\mathbf w$ is assumed to be Gaussian with mean $\mathbf 0$ and  covariance matrix $\sigma_w^2 \mathbf I_M$.

  When  $x\in  \mathcal S_i$ for some $i$ in the model (\ref{union}), the vector of samples  can be equivalently  represented  in the  form of a matrix vector multiplication,
  \begin{eqnarray}
  \mathbf y = \mathbf B_i \mathbf c_i + \mathbf w\label{obs_2}
  \end{eqnarray}
  where
  \begin{eqnarray*}
  \mathbf B_i = A V_i = \left(
  \begin{array}{cccccc}
  \inner{a_0}{v_{i0}} & \inner{a_0}{v_{i1}} & \cdots & \inner{a_0}{v_{i(k-1)}}\\
   \inner{a_1}{v_{i0}} & \inner{a_1}{v_{i1}} & \cdots & \inner{a_1}{v_{i(k-1)}}\\
   \vdots  &\vdots & \vdots  & \vdots\\
   \inner{a_{M-1}}{v_{i0}} & \inner{a_{M-1}}{v_{i1}} & \cdots & \inner{a_{M-1}}{v_{i(k-1)}}
    \end{array}\right)
  \end{eqnarray*}
  and $\mathbf c_i = [c_{i}(0) ~c_{i}(1) ~ \cdots c_{i}(k-1)]^T$ is the coefficient vector with respect to the basis $V_i$. Further, let $\mathbf b_{im}$ denote the $m$-th column vector of the matrix $\mathbf B_i$ for $m=0,1,\cdots, k-1$ and $i = 0,1,\cdots, T-1$. We assume that the linear sampling operator $A$ is a one-to-one mapping between $\mathcal X$ and $A\mathcal X$.  Since  $\{v_{i0}, \cdots, v_{i(k-1)}\}$ is a set of  linearly independent basis vectors,  then $\{\mathbf b_{i0}, \cdots, \mathbf b_{i(k-1)}\}$ are also linearly independent for each $i=0,1,\cdots, T-1$. It is worth noting that, while this one-to-one condition ensures uniqueness,  stronger conditions are required to recover $x$ in a stable manner as discussed in  \cite{Lu1}.


\subsection{Subspace recovery  from  the union of subspaces model}
As discussed in the Introduction, there are applications where it is sufficient to recover  the subspace in which   the signal of interest lies from the union of subspaces model (\ref{union}) instead of complete signal recovery. Moreover,  if there is a procedure to correctly identify the subspace  with vanishing  probability of error, then   the signal $x$ can be reconstructed with a small  $l_2$ norm error using standard techniques.  However, the other way would not be always true, i.e., if an algorithm developed for complete signal recovery is used for subspace recovery, it may  not give an equivalent performance guarantee. This is because, even if such an estimate of the signal may be close to the true signal with respect to the considered performance metric (e.g., $l_2$ norm error), the subspace in which the estimated signal lies  may be different from the true subspace.  This can happen especially when the SNR is not sufficiently large. Thus, investigating the problem of subspace recovery is important and  is the main focus of this paper.

The problem of subspace recovery  is to identify the subspace in which the signal $x$ lies. The estimated subspace, $\hat {\mathcal S}$,  via any  recovery scheme  can be expressed in the following form:
\begin{eqnarray}
\hat {\mathcal S} = \zeta (\mathbf y)\label{estimate_1}
\end{eqnarray}
where $\zeta(\cdot)$ is a mapping from the observation vector $\mathbf y$ to an estimated subspace $\hat{\mathcal S} \in \{\mathcal S_0, \cdots, \mathcal S_{T-1}\}$.
The performance metric used to evaluate the quality of the estimate (\ref{estimate_1}) is taken as the average probability of error defined as
\begin{eqnarray}
P_e = \underset{\mathcal S}{\sum} Pr(\zeta(\mathbf y) \neq \mathcal S | \mathcal S) Pr (\mathcal S)
\end{eqnarray}
for a given recovery scheme   $\zeta (\mathbf y)$. We say that the mapping $\zeta (\mathbf y)$ is capable of providing \emph{asymptotically reliable} subspace recovery   if $P_e\rightarrow 0$ as $M\rightarrow \infty$.
In this paper, we consider  subspace recovery  via the ML estimation.
Our goal is to address the following issues.

\begin{itemize}
\item Performance of the  ML estimation scheme   in terms of  the probability of error in recovering  the subspaces from the union of subspaces model (\ref{union}) in the presence of noise. We are also  interested in  conditions under which  asymptotically  reliable  subspace  recovery in  the union  is guaranteed  with a given sampling operator.
    \item How much gain in terms of the  number of samples required for subspace recovery  can be achieved if further information on structures is available for the subspaces in (\ref{union}) compared to the case when  no additional structured information is available   (i.e. compared to the  standard sparsity model used in CS).
        \item Illustration of the performance gap between the ML estimation  and  computationally tractable  algorithms for subspace recovery  from the union of subspaces model at finite SNR.
        \end{itemize}

The main results of the paper can be summarized as follows. With the general union of subspaces model as defined in (\ref{union}), and for a given sampling operator, the minimum number of samples required for asymptotically  reliable recovery  of subspaces in the presence of noise is
 \begin{eqnarray}
 M > k+ \frac{\eta_3}{f(SNR)} \log(\bar T_0)\label{M_general}
  \end{eqnarray}where $k$ is the dimension of each subspace, $f(SNR)$ is a  measure  of the minimum SNR of the sampled signal projected onto the null space of any subspace in the union,  $\bar T_0$ is a measure of the number of subspaces in the union with  maximum dependence where $\bar T_0 \leq T$ (formal definitions of all these terms are given  in Section \ref{ML_decoder}), and $\eta_3$ is a constant. We simplify (\ref{M_general}) for  the special case where each subspace in the union (\ref{union}) can be expressed as a sum of  $k_0$ subspaces out of $L$  where each such subspace is $d$-dimensional such   that $k=k_0d$. Then, the problem of subspace recovery  reduces to the problem of block sparsity pattern recovery. Further, assuming that the sampling operator is represented by random projections, the number of samples required for asymptotically  reliable block sparsity pattern recovery is given by \begin{eqnarray}
   M > k + \frac{\eta_4}{BSNR_{\min}} \log(L-k_0)\label{M_block}
    \end{eqnarray}
    where $BSNR_{\min}$ is the minimum nonzero block SNR and  $\eta_4$ is a constant. When $d=1$ and $L=N$ where $N$ is the signal dimension, the block sparsity model reduces to the standard sparsity model. Then,  our result shows that
      \begin{eqnarray}
     M >  k + \frac{\eta_2}{ CSNR_{\min}}\log(N-k) \label{M_standard}
      \end{eqnarray}
      measurements are required for reliable sparsity pattern recovery where $CSNR_{\min}(\leq \frac{BSNR_{\min}}{d})$ is the minimum component SNR of the signal.  Thus,  from (\ref{M_block}) and (\ref{M_standard}), we observe that the number of measurements  required for asymptotically reliable  subspace recovery beyond the sparsity index (i.e., $M-k$) reduces  approximately $d$ times  with a block sparsity model (so that $k=k_0 d$) compared to the standard  sparsity model.
     A detailed comparison between  our results and  existing results in the literature  is  given  in  Section \ref{discussion}.


\section{Subspace Recovery    With General Unions}\label{ML_decoder}
  The problem of finding the true subspace from the union (\ref{union}) based on  the observation model (\ref{obs_2}) via the ML estimation  becomes finding the index $\hat i$ such that,
\begin{eqnarray*}
\hat i = \underset{i = 0,\cdots, T-1}{\arg \max} ~ p(\mathbf y |  \mathbf B_i).
\end{eqnarray*}
When  $x\in \mathcal S_i$ in (\ref{union}) for some $i$,  and using  the observation model (\ref{obs_2}), we have $
p(\mathbf y | \mathbf B_i) = \mathcal N (\mathbf B_i \mathbf c_i, \sigma_w^2 \mathbf I_M).
$
The signal $x$ is assumed to be deterministic but unknown. Thus,  when $x\in \mathcal S_i$, the  coefficient vector  $\mathbf c_i$ with respect to a given  basis $\mathbf B_i$ is unknown. Assuming that each $\mathbf B_i$ has rank $k$ for $i=0,\cdots, T-1$, the ML estimate of  $\mathbf c_i$  such that $p(\mathbf y | \mathbf B_i) $ is maximized can be found as,
$
\hat{\mathbf c_i}  = (\mathbf B_i^* \mathbf B_i)^{-1} \mathbf B_i ^* \mathbf y
$. This results  in
\begin{eqnarray*}
\log (\underset{\mathbf c_i}{\max} ~ p(\mathbf y | \mathbf B_i)) &=& \log \left(\frac{1}{(2\pi \sigma_w^2)^{M/2}}\right) - \frac{1}{2\sigma_w^2} || \mathbf y - \mathbf P_i \mathbf y||_2^2 \nonumber\\
&=&  \log \left(\frac{1}{(2\pi \sigma_w^2)^{M/2}}\right) - \frac{1}{2\sigma_w^2} ||\mathbf P^\bot_i \mathbf y||_2^2
\end{eqnarray*}
where $ \mathbf P_i  = \mathbf B_i(\mathbf B_i^* \mathbf B_i)^{-1} \mathbf B_i ^*  $ is the orthogonal projector onto the span of  $\{\mathbf b_{im}\}_{m=0}^{k-1}$ and $\mathbf P^\bot_i = \mathbf I - \mathbf P_i$.
Thus, the estimated index of the subspace by the ML estimation  is,
\begin{eqnarray}
\hat i = \underset{i=0,\cdots, T-1}{\arg\min } ||\mathbf P^\bot_i \mathbf y||_2^2.
\end{eqnarray}

The   probability of error of the ML estimation  is given by,
\begin{eqnarray}
P_e &=& Pr(\mathbf B_{estimated}\neq \mathbf B_{true}) = \underset{i} {\sum} Pr(\hat i \neq i | \mathbf  B_i) Pr(\mathbf B_i)\nonumber\\
& \leq & \underset{i} {\sum} \underset{j\neq i} {\sum}  Pr( \hat i= i | \mathbf B= \mathbf B_j) Pr(\mathbf B=\mathbf B_j)\label{PE_ML_Union_bound}
\end{eqnarray}
where $Pr( \hat i = i | \mathbf B= \mathbf B_j)$ is the probability of selecting $\mathcal S_i$ when the true subspace is $\mathcal S_j$. Since the   ML estimation  decides   the subspace $\mathcal S_i$ over $\mathcal S_j$  when $||\mathbf P^\bot_i \mathbf y||_2^2 - ||\mathbf P^\bot_j \mathbf y||_2^2 < 0$, $Pr( \hat i = i | \mathbf B= \mathbf B_j)$ is given by
\begin{eqnarray*}
Pr( \hat i = i | \mathbf B= \mathbf B_j)  = Pr (||\mathbf P^\bot_i \mathbf y||_2^2 - ||\mathbf P^\bot_j \mathbf y||_2^2 < 0 )
\end{eqnarray*}
for $i\neq j$.

%

Let $\Delta_{ij}(\mathbf y) = ||\mathbf P^\bot_i \mathbf y||_2^2 - ||\mathbf P^\bot_j \mathbf y||_2^2$ for $i\neq j$. When the true subspace is $\mathcal S_j$ so that $Ax = \mathbf B_j \mathbf c_j$, we have $||\mathbf P^\bot_j \mathbf y||_2^2 = ||\mathbf P^\bot_j \mathbf w||_2^2$ and
\begin{eqnarray}
\mathbf P^\bot_i \mathbf y &=& \mathbf P^\bot_i A x  + \mathbf P^\bot_i \mathbf w\nonumber\\
 &=&\mathbf P^\bot_i \mathbf B_j \mathbf c_j + \mathbf P^\bot_i \mathbf w= \mathbf P^\bot_i \mathbf B_{j\setminus i} \mathbf c_{j\setminus i} + \mathbf P^\bot_i \mathbf w
\end{eqnarray}
where $\mathbf B_{j\setminus i} \mathbf c_{j\setminus i} = \underset{\mathbf b_{jm}\notin \mathcal R(\mathbf B_i)}{\sum} \mathbf b_{jm} c_j(m)$ and  $\mathcal R(\mathbf A)$ denotes the range space of  the matrix $\mathbf A$. More specifically, the  $M \times l$ matrix $\mathbf B_{j\setminus i}$  contains  the columns of $\mathbf B_j$ which are not in the range space of the matrix $\mathbf B_i$ where $l$ is  the number of columns in $\mathbf B_{j\setminus i}$. The  $l\times 1$ vector $\mathbf c_{j\setminus i}$ contains the elements of $\mathbf c_j$ corresponding to the column vectors in $\mathbf B_{j\setminus i}$.

We conclude that,  the decision statistic for selecting $\mathcal S_i$ over $\mathcal S_j$ is given by
$\Delta_{ij}(\mathbf y) = ||\mathbf P^\bot_i(\mathbf B_{j \setminus i} \mathbf c_{j\setminus i} + \mathbf w)||_2^2 - || \mathbf P^\bot_j \mathbf w||_2^2$ and $Pr(\Delta_{ij}(\mathbf y) < 0)$ $ = Pr\left(\frac{||\mathbf P^\bot_i(\mathbf B_{j \setminus i} \mathbf c_{j\setminus i} + \mathbf w)||_2^2}{||\mathbf  P^\bot_j \mathbf w||_2^2} < 1\right)$.   When  $\mathbf B_j$ is given, the random variable $g_1 = ||\mathbf P^\bot_i(\mathbf B_{j \setminus i} \mathbf c_{j\setminus i} + \mathbf w)||_2^2 / \sigma_w^2$ is a non-central Chi squared random variable with $M-k$ degrees of freedom and  non-centrality parameter $||\mathbf P^\bot_i(\mathbf B_{j \setminus i} \mathbf c_{j\setminus i} )||_2^2/\sigma_w^2$. The random variable  $g_2 = || \mathbf P^\bot_j \mathbf w||_2^2 / \sigma_w^2$ is a (central) Chi-squared random variable with $M- k$ degrees of freedom.
The two random variables $g_1$ and $g_2$ are, in general,  correlated and  the computation of the exact value of $Pr\left(\Delta_{ij}(\mathbf y) < 0\right)$ is difficult. In the following we find an upper bound for the quantity $Pr\left(\Delta_{ij}(\mathbf y) < 0\right)$ following  techniques similar to those  proposed in \cite{wain2}.

\subsection{Upper bound on $Pr\left(\Delta_{ij}(\mathbf y) < 0\right)$}
For  clarity, we  introduce  the following notation.  Let $\mathcal W_{j\setminus i}$ be  the set consisting of column indices of $\mathbf B_j$  such that $\mathbf b_{jm} \notin \mathcal R(\mathbf B_i)$ for $m=0,1,\cdots k-1$ and  $i\neq j$ ($i,j, = 0,1,\cdots, T-1$). We then have that $ |\mathcal W_{j\setminus i}| =l$ where  $l$ can take values from $1,2,\cdots, k$.

\begin{lemma}\label{lammaU_1}
Assume that the  sampling operator $A$ is known. For any  given signal $x\in \mathcal S_j$ so that $A x = \mathbf B_j \mathbf c_j$,  the probability of error  in selecting the subspace $\mathcal S_i$ over $\mathcal S_j$, $Pr(\Delta_{ij}(\mathbf y) < 0)$,   is upper bounded  by,
\begin{eqnarray}
Pr(\Delta_{ij}(\mathbf y) < 0) \leq Q \left(\frac{1}{2}(1 - 2\eta_0 )\sqrt{\lambda_{j\setminus i}} \right) +  \Psi\left(l, \lambda_{j\setminus i}\right) \label{p_delta_y}
\end{eqnarray}
where $\lambda_{j\setminus i} =  \frac{1}{\sigma_w^2} || {\mathbf P}_i^\bot  {\mathbf B}_{j\setminus i}{\mathbf c}_{j\setminus i}||_2^2$, $\Psi\left(l, \lambda_{j\setminus i}\right) =  \frac{\sqrt 2}{2^{l}\Gamma(l/2)}(\eta_0\lambda_{j\setminus i} )^{l/2-1/2} K_{l/2-1/2}\left(\frac{\eta_0\lambda_{j\setminus i}}{2}\right)$,    $Q(x)$ is the Gaussian $Q$ function (\ref{q_func}),   $\Gamma(x)$ is the Gamma function (\ref{gamma_func}),   $K_{\nu}(x)$ is the modified Bessel function (\ref{bessel_func}), and  $0 < \eta_0 < \frac{1}{2} $.
\end{lemma}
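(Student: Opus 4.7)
The plan is to isolate the noise-dependent pieces of $\Delta_{ij}(\mathbf y)$ into a Gaussian cross term and a quadratic form in $\mathbf w$, then bound each piece via a union-bound-style split controlled by the slack parameter $\eta_0$.

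First, since the true subspace is $\mathcal S_j$, I substitute $\mathbf y = \mathbf B_j \mathbf c_j + \mathbf w$. Because $\mathbf P_j^\perp \mathbf B_j = \mathbf 0$ and $\mathbf P_i^\perp \mathbf B_j \mathbf c_j = \mathbf P_i^\perp \mathbf B_{j\setminus i} \mathbf c_{j\setminus i} =: \mathbf u$ (the columns of $\mathbf B_j$ lying in $\mathcal R(\mathbf B_i)$ are killed by $\mathbf P_i^\perp$), expansion of $\|\mathbf u + \mathbf P_i^\perp \mathbf w\|_2^2 - \|\mathbf P_j^\perp\mathbf w\|_2^2$ gives
\[
\sigma_w^2\,\Delta_{ij}(\mathbf y) = \|\mathbf u\|_2^2 + 2\mathbf u^* \mathbf w + \mathbf w^*(\mathbf P_j - \mathbf P_i)\mathbf w.
\]
Here $\|\mathbf u\|_2^2/\sigma_w^2 = \lambda_{j\setminus i}$, the cross term $V := 2\mathbf u^* \mathbf w/\sigma_w^2$ is Gaussian with distribution $\mathcal N(0, 4\lambda_{j\setminus i})$, and $W := \mathbf w^*(\mathbf P_j - \mathbf P_i)\mathbf w/\sigma_w^2$ is a (generally indefinite) quadratic form in the noise.

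Next, for any $0 < \eta_0 < 1/2$, the elementary inclusion
\[
\{\lambda_{j\setminus i} + V + W < 0\} \subseteq \{V < -(1-2\eta_0)\lambda_{j\setminus i}\} \cup \{W < -2\eta_0 \lambda_{j\setminus i}\}
\]
(if both complements hold then $V + W \geq -\lambda_{j\setminus i}$), combined with the union bound, splits $\Pr(\Delta_{ij}(\mathbf y)<0)$ into two summands. Standardizing $V$ turns the first summand into an exact Gaussian tail equal to $Q\bigl((1-2\eta_0)\sqrt{\lambda_{j\setminus i}}/2\bigr)$, matching the stated $Q$-function term.

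The remaining step, and the main technical obstacle, is to show $\Pr(W < -2\eta_0\lambda_{j\setminus i}) \leq \Psi(l, \lambda_{j\setminus i})$. Here I would exploit that $\mathbf P_j - \mathbf P_i$ is supported only on the ``gap'' between the two range spaces: it vanishes on $\mathcal R(\mathbf B_i) \cap \mathcal R(\mathbf B_j)$ and on $\mathcal R(\mathbf B_i)^\perp \cap \mathcal R(\mathbf B_j)^\perp$, so its effective rank is controlled by $l = |\mathcal W_{j\setminus i}|$, the number of columns of $\mathbf B_j$ outside $\mathcal R(\mathbf B_i)$. Computing the moment generating function $E[e^{-sW}]$ through the Gaussian quadratic-form identity yields a determinant that factors as a product over the principal angles between $\mathcal R(\mathbf B_i)$ and $\mathcal R(\mathbf B_j)$; after bounding by the worst-case configuration and applying the integral representation $K_\nu(z) = \tfrac{1}{2}(z/2)^\nu \int_0^\infty t^{-\nu-1} e^{-t - z^2/(4t)}\,dt$ of the modified Bessel function, the resulting tail can be identified with $\Psi(l, \lambda_{j\setminus i})$. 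Adding this to the Gaussian term yields (\ref{p_delta_y}).
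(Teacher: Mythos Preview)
Your decomposition and union-bound split coincide with the paper's: the paper adds and subtracts $\|\mathbf P_i^\bot\mathbf w\|_2^2$ to write $\Delta_{ij}(\mathbf y)$ as $(\|\mathbf P_i^\bot\mathbf y\|_2^2 - \|\mathbf P_i^\bot\mathbf w\|_2^2) - (\|\mathbf P_j^\bot\mathbf y\|_2^2 - \|\mathbf P_i^\bot\mathbf w\|_2^2)$, which in your notation is $\sigma_w^2(\lambda_{j\setminus i} + V) - \sigma_w^2(-W)$, and then splits via the events $h_1(\delta) = \{|{-W}| \geq \delta\}$ and $h_2(\delta) = \{\lambda_{j\setminus i} + V \leq 2\delta\}$ at $\delta = \eta_0\lambda_{j\setminus i}$. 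Your treatment of the Gaussian cross term $V$ and the resulting $Q$-function term are identical to the paper's.

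The gap is in your handling of $W$. The paper does \emph{not} use an MGF/principal-angle argument. Instead it invokes the structural fact (cited from Wainwright) that $-W = (\|\mathbf P_j^\bot\mathbf w\|_2^2 - \|\mathbf P_i^\bot\mathbf w\|_2^2)/\sigma_w^2$ is distributed as $x_1 - x_2$ with $x_1, x_2$ \emph{independent} $\chi_l^2$ random variables. It then computes the density of $x_1 - x_2$ explicitly by convolving two $\chi_l^2$ densities --- using the table integral $\int_0^\infty x^{\nu-1}(x+\beta)^{\nu-1}e^{-\mu x}\,dx$, which produces $K_{1/2-l/2}$ --- and bounds the tail $\int_\delta^\infty f_w^+(w)\,dw$ via a second Bessel integral identity together with $Q(x)\le \tfrac12 e^{-x^2/2}$. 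This explicit chain of closed-form evaluations is what yields exactly $\Psi(l,\lambda_{j\setminus i})$.

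Your MGF sketch does not supply this. It is unclear how a Chernoff-type bound would produce the specific prefactor $\tfrac{\sqrt 2}{2^l\Gamma(l/2)}(\eta_0\lambda_{j\setminus i})^{l/2-1/2}$ and the argument $\eta_0\lambda_{j\setminus i}/2$ inside $K_{l/2-1/2}$; the ``bound by worst-case configuration of principal angles'' step is asserted without justification; and even if you push all angles to $\pi/2$ you land precisely on the independent $\chi_l^2 - \chi_l^2$ case and still need the explicit density/tail computation the paper carries out. (A minor point: your threshold for $W$ is $-2\eta_0\lambda_{j\setminus i}$ rather than the paper's $\eta_0\lambda_{j\setminus i}$; this is harmless for an upper bound but means a direct computation along your split would give $\Psi$ with $2\eta_0$ in place of $\eta_0$ unless you relax back.)
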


\begin{proof}
See Appendix A.
\end{proof}
\begin{theorem}\label{theorem1}
Assuming that the true subspace  is chosen uniformly at random   from $T$ subspaces in  the union (\ref{union}),  the average probability of error of the ML estimation  for subspace recovery   is upper bounded by,
\begin{eqnarray}
P_e \leq \frac{1}{T } \sum_{i=0}^{T-1}\sum_{j=0}^{T-1} Q \left(\frac{1}{2}(1 - 2\eta_0 )\sqrt{\lambda_{j\setminus i}} \right) +  \Psi\left(l, \lambda_{j\setminus i}\right) \label{Pe_bound0}
\end{eqnarray}
where $\lambda_{j\setminus i} $,  $\eta_0 $, $Q$, $\Psi$ are as defined in Lemma \ref{lammaU_1}.
\end{theorem}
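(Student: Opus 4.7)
The plan is to derive this theorem as a direct consequence of Lemma~\ref{lammaU_1} combined with the pairwise union bound already established in (\ref{PE_ML_Union_bound}). The starting point is the identity
\begin{eqnarray*}
P_e = \sum_{j=0}^{T-1} Pr(\hat i \neq j \mid \mathbf B_j)\, Pr(\mathbf B_j),
\end{eqnarray*}
and the observation that, by the ML decision rule, the event $\{\hat i \neq j\}$ is contained in the union $\bigcup_{i \neq j} \{\Delta_{ij}(\mathbf y) < 0\}$. A standard union bound over $i$ then recovers (\ref{PE_ML_Union_bound}), after which I would substitute the uniform prior $Pr(\mathbf B_j) = 1/T$ coming from the hypothesis that the true subspace is chosen uniformly at random.

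The second step is purely mechanical: bound each pairwise term $Pr(\Delta_{ij}(\mathbf y) < 0)$ by the quantity $Q\bigl(\tfrac{1}{2}(1 - 2\eta_0)\sqrt{\lambda_{j\setminus i}}\bigr) + \Psi(l, \lambda_{j\setminus i})$ supplied by Lemma~\ref{lammaU_1}. Combining the two bounds gives
\begin{eqnarray*}
P_e \leq \frac{1}{T}\sum_{j=0}^{T-1}\sum_{i \neq j}\!\left[\,Q\!\bigl(\tfrac{1}{2}(1-2\eta_0)\sqrt{\lambda_{j\setminus i}}\bigr) + \Psi(l, \lambda_{j\setminus i})\right],
\end{eqnarray*}
which is the claimed inequality up to whether the diagonal $i=j$ is included in the sum; since those terms can only loosen the bound (and may be interpreted as vanishing since $\mathcal W_{j\setminus j} = \emptyset$ makes $\lambda_{j\setminus j} = 0$), extending the inner sum to all $i$ as in (\ref{Pe_bound0}) preserves the inequality.

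There is really no substantive obstacle here, since all the heavy lifting sits inside Lemma~\ref{lammaU_1}, which dealt with the correlated non-central chi-squared ratio. The only bookkeeping subtlety is that both $l$ and $\lambda_{j\setminus i}$ depend implicitly on the pair $(i,j)$ (with $l = |\mathcal W_{j\setminus i}|$ and $\lambda_{j\setminus i}$ also depending on the coefficient vector $\mathbf c_j$ corresponding to the true subspace); the inequality is therefore to be read pointwise in the signal $x \in \mathcal S_j$. I would make this dependence explicit when stating the bound, so the reader does not misread the double sum as involving a single constant $l$.
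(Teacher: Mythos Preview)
Your proposal is correct and matches the paper's own proof, which simply states that the result follows from Lemma~\ref{lammaU_1} together with the union bound (\ref{PE_ML_Union_bound}). One small correction to your parenthetical remark: the diagonal terms do not actually vanish (since $\lambda_{j\setminus j}=0$ gives $Q(0)=1/2$, and $\Psi$ is not even well-defined at $l=0$), but your primary justification---that including them can only weaken the bound because the summands are nonnegative---is the right one and suffices.
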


\begin{proof}
The proof follows from Lemma \ref{lammaU_1} and (\ref{PE_ML_Union_bound}).
\end{proof}

In general,  the subspaces $\mathcal S_i$ and $\mathcal S_j$ can overlap; i.e. there can be elements in $\mathcal S_j$ which are also in $\mathcal S_i$. However, one subspace can not lie in another subspace entirely; i.e. all subspaces $\mathcal S_i$'s are distinct for $i=0,1,\cdots, T-1$.   As defined before,  $\mathcal W_{j\setminus i}$ contains  the column  indices of $\mathbf B_j$ which are not in $\mathcal R(\mathbf B_i)$ and $|\mathcal W_{j\setminus i}|=l$ for any $i\neq j$ where $l$  takes values from $1,2,\cdots,k$.  As $l$ increases, the overlap   of the two subspaces decreases resulting in more separable subspaces. In the special case where  $\mathcal S_j$ and $\mathcal S_i$ do not intersect  at all, we have $l=k$.  Thus, $l$ can be considered as a measure of overlap  between any two subspaces $\mathcal S_j$ and $\mathcal S_i$ for $i\neq j$ in the union (\ref{union}).    For given $l$,  the probability $Pr(\Delta_{ij}(\mathbf y) < 0)$ in (\ref{p_delta_y})  monotonically decreases as $\lambda_{j\setminus i}$, defined in Lemma \ref{lammaU_1},  increases. This implies that when  $\lambda_{j\setminus i}$ is large, the probability of selecting  $\mathcal S_i$ as the true subspace (given that the true subspace is $\mathcal S_j$) decreases. In other words, $\lambda_{j\setminus i}$, is used to characterize the error in selecting the subspace $\mathcal S_i$ over $\mathcal S_j$ for $i\neq j$ (or how distinguishable the subspace $\mathcal S_i$ is with respect to  $\mathcal S_j$) when the true subspace is $\mathcal S_j$.  It is, therefore,  of interest to further investigate the quantity $\lambda_{j\setminus i}$.

\subsection{Evaluation of  $\lambda_{j\setminus i}$}\label{eval_lamda}
For any given signal $x\in \mathcal S_j$, as defined in Lemma \ref{lammaU_1}, $\lambda_{j\setminus i}$ is given by,
\begin{eqnarray*}
\lambda_{j\setminus i} =  \frac{1}{\sigma_w^2} || {\mathbf P}_i^\bot  Ax ||_2^2 = \frac{1}{\sigma_w^2} || {\mathbf P}_i^\bot  {\mathbf B}_{j\setminus i}{\mathbf c}_{j\setminus i}||_2^2.
\end{eqnarray*}
When  the true subspace is assumed to be $\mathcal S_j$, the quantity $|| {\mathbf P}_i^\bot  {\mathbf B}_{j\setminus i}{\mathbf c}_{j\setminus i}||_2^2 ~(=|| {\mathbf P}_i^\bot  {\mathbf B}_j{\mathbf c}_{j}||_2^2 = || {\mathbf P}_i^\bot  Ax||_2^2)$ denotes the energy of the sampled signal $Ax$ projected onto the null space of $\mathbf B_i$; i.e., the energy of the sampled signal which is unaccounted for by $\mathcal S_i$ for $i\neq j$. Therefore, when $|| {\mathbf P}_i^\bot  {\mathbf B}_{j\setminus i}{\mathbf c}_{j\setminus i}||_2^2$ is large, the probability that the subspace $\mathcal S_i$ is  selected as the true subspace becomes small.
Further, if $\mathcal S_j \subseteq \mathcal S_i$ for any $\mathcal S_i$, we have $|| {\mathbf P}_i^\bot  {\mathbf B}_{j\setminus i}{\mathbf c}_{j\setminus i}||_2^2 = 0$. However, this cannot happen based on our assumption that there is no subspace in the union which completely overlaps another. Thus, $\lambda_{j\setminus i} > 0$.

Let the eigendecomposition of ${\mathbf P}_i^\bot$ be ${\mathbf P}_i^\bot = \mathbf Q_i \boldsymbol\Lambda_i \mathbf Q_i^T$ where $\mathbf Q_i$ is a unitary matrix consisting of eigenvectors of ${\mathbf P}_i^\bot$ and   $\boldsymbol\Lambda_i$ is a diagonal matrix in which the diagonal elements represent  eigenvalues of ${\mathbf P}_i^\bot$ which are $M-k$  ones and $k$ zeros. Then, for given $l$,
\begin{eqnarray}
\lambda_{j\setminus i} = \frac{1}{\sigma_w^2} || {\mathbf P}_i^\bot  {\mathbf B}_{j\setminus i}{\mathbf c}_{j\setminus i}||_2^2 = \underset{m\in \mathcal Q_i}{\sum} \alpha_{m,i}^2(l)  \geq (M-k) \alpha_{\min,l}^2\label{alpha_min}
\end{eqnarray}
where $\alpha_{m,i}(l) = \frac{1}{\sigma_w}  \langle  \mathbf q_{m,i} ,  {\mathbf B}_{j\setminus i}{\mathbf c}_{j\setminus i}\rangle$ for given $l$,  $\mathbf q_{m,i}$ is the $m$-th eigenvector of ${\mathbf P}_i^\bot$, $\mathcal Q_i$ is the set containing indices corresponding to nonzero eigenvalues where $|\mathcal Q_i| = M-k$ and $\alpha_{\min,l} =\underset{i; i\neq j}{\min} |\alpha_{m,i}(l)|$.


Note that $ (M-k) \alpha_{\min,l}^2$ is a measure of  the minimum SNR of the sampled signal, $Ax$, projected onto the null space of any subspace $\mathcal S_i$ for $i\neq j$,  $i=0,1,\cdots,T-1$ such that $|\mathcal W_{j\setminus i}|=l$  given that the true subspace in which  the signal lies is $\mathcal S_j$.

For a given subspace  $\mathcal S_j$, define $T_j(l)$ to be the number of subspaces $\mathcal S_i$ such that $|\mathcal W_{j\setminus i}| = l$.  With these notations, the probability of error in (\ref{Pe_bound0}) can be further upper bounded by,
\begin{eqnarray}
P_e &\leq& \frac{ 1}{T} \sum_{j=0}^{T-1} \sum_{l=1}^{k} T_j(l) \left(Q \left(\frac{1}{2}(1 - 2\eta_0 )\sqrt{(M-k) \alpha_{\min,l}^2} \right)\right. \nonumber\\
&+& \left. \Psi\left(l, {(M-k) \alpha_{\min,l}^2}\right)\right) \label{pe_ML1}
\end{eqnarray}
where $\Psi\left(l, {(M-k) \alpha_{\min,l}^2}\right) = \frac{\sqrt 2}{2^{l}\Gamma(l/2)}(\eta_0(M-k) \alpha_{\min,l}^2  )^{l/2-1/2} K_{l/2-1/2}(\eta_0(M-k) \alpha_{\min,l}^2 /2)$.
To obtain  (\ref{pe_ML1}) we used the facts that $Q(x)$ is  monotonically non increasing in $x$ and $\Psi(s,x)$ is monotonically non increasing in  $x$ for given $s$ when $x>0$. The quantity $T_j(l)$ is a measure of the overlap  between  $\mathcal S_j$ and any subspace $\mathcal S_i$ for $i\neq j, i=0,1,\cdots, T-1$. To compute $T_j(l)$ explicitly,  the specific structures of  the subspaces should be known. For example, in the standard sparsity model used in CS in which the union in (\ref{union}) consists of $T={N\choose k}$ subspaces from an orthonormal basis $\mathbf V$ of dimension $N$, there are ${k\choose l}{{N-k}\choose l}$ number of sets such that $|\mathcal W_{j\setminus i}|=l$, thus $T_j(l) = {k\choose l}{{N-k}\choose l}$. In that particular case,  $T_j(l)$ is the same for all $j=0,1,\cdots,T-1$.     To  further upper bound (\ref{pe_ML1}), we let
\begin{eqnarray}
T_0(l) = \underset{j=0,1,\cdots,T-1} {\max}T_j(l). \label{T_0}
\end{eqnarray} Then,
\begin{eqnarray}
P_e &\leq&   \sum_{l=1}^{k} T_0(l) \left(Q \left(\frac{1}{2}(1 - 2\eta_0 )\sqrt{(M-k) \alpha_{\min,l}^2} \right) \right. \nonumber\\
&+&\left. \Psi\left(l, {(M-k) \alpha_{\min,l}^2}\right)\right)\label{pe_ML}.
\end{eqnarray}

\begin{theorem}\label{lemma_M_conditions}
Let $\alpha_{\min,l}^2$ and $T_0(l) $ be as defined in (\ref{alpha_min}) and (\ref{T_0}), respectively.   Suppose that  sampling is performed via a  sampling operator $A$. Then $P_e $ in (\ref{pe_ML}) vanishes asymptotically (i.e., $\underset{(M-k)\rightarrow \infty}{\lim} P_e \rightarrow 0$) if the following condition is satisfied:
\begin{eqnarray*}
M > k + \max\{M_1, M_2\}
\end{eqnarray*}
 where
 \begin{eqnarray}
M_1&=& \underset{l=1,\cdots,k}{\max} \left\{f_1(l) \right.\nonumber\\
&=&\left.\frac{8}{(1-2\eta_0)^2\alpha_{\min,l}^2}
  \left\{\log(T_0(l))   + \log(1/2)\right\}\right\}
 \end{eqnarray}
 \begin{eqnarray}
  M_2 &=& \underset{l=1,\cdots,k}{\max} \left\{f_2(l)\right. \nonumber\\
  &=& \left. \frac{2(k/2 + r_0 -1)}{r_0\eta_0   \alpha_{\min,l}^2}
 \left\{\log(T_0(l)) + \log\left(\frac{2b_0}{\sqrt\pi}\right)\right\} \right\}
 \end{eqnarray} with   $0<\eta_0 < 1/2 $, $b_0 = \frac{\sqrt{2\pi}}{4}$ and $r_0 >0$.
\end{theorem}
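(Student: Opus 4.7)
The plan is to start from the union-style bound in (\ref{pe_ML}) and impose a vanishing condition on each of the $k$ summands $T_0(l)\bigl[Q(\cdot) + \Psi(l,\cdot)\bigr]$ separately. Since the two summands have very different analytic forms, I would handle the Gaussian tail and the modified Bessel piece independently, take a maximum over $l \in \{1,\dots,k\}$, and then combine via the maximum of the two resulting thresholds. Throughout, I treat $\eta_0 \in (0,1/2)$ and $r_0>0$ as free parameters that will end up inside the final expressions.

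For the Gaussian $Q$-function term, I would invoke the Chernoff-type estimate $Q(x)\le \tfrac12 e^{-x^2/2}$, so that
\begin{equation*}
T_0(l)\, Q\!\left(\tfrac12 (1-2\eta_0)\sqrt{(M-k)\alpha_{\min,l}^2}\right) \le \tfrac12 \exp\!\left(\log T_0(l) - \tfrac{(1-2\eta_0)^2}{8}(M-k)\alpha_{\min,l}^2\right).
\end{equation*}
Demanding that the argument of the exponential be non-positive (so that the term is bounded by $1/2$) and driven to $-\infty$ as $(M-k)\to\infty$ immediately yields the condition $M-k > f_1(l)$; worst-casing over $l$ produces $M_1$.

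The Bessel piece is the technically harder part. I would first apply a uniform upper bound of the form $K_\nu(x)\le \sqrt{\pi/(2x)}\,e^{-x}$, which reduces $\Psi\bigl(l,(M-k)\alpha_{\min,l}^2\bigr)$ to an expression of the type $C_l\,(\eta_0(M-k)\alpha_{\min,l}^2)^{l/2-1}\, e^{-\eta_0(M-k)\alpha_{\min,l}^2/2}$, where the constants collect into a factor that eventually becomes $\log(2b_0/\sqrt\pi)$ with $b_0 = \sqrt{2\pi}/4$. The obstacle is the polynomial prefactor $\lambda^{l/2-1}$, which can grow with $M-k$ when $l>2$. To absorb it, I would employ the tangent-line inequality $\log u \le u/c + \log(c/e)$, valid for any $c>0$ and all $u>0$, and choose $c$ proportional to $l/2+r_0-1$. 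This converts the polynomial factor into an exponential whose rate is diminished by a factor $r_0/(l/2+r_0-1)$; replacing $l/2-1$ by its uniform upper bound $k/2-1$ using $l\le k$ then yields $M-k > f_2(l)$, and again taking the worst case over $l$ produces $M_2$.

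Finally I would collect the two conditions through the union bound (\ref{pe_ML}), obtaining $M > k + \max\{M_1,M_2\}$ as a sufficient condition for $P_e \to 0$ as $(M-k)\to\infty$. I expect the main obstacle to be the Bessel step: choosing a non-asymptotic upper bound on $K_\nu$ that is both sharp enough to track the $\eta_0 \lambda/2$ exponential decay rate and uniform in the order $\nu = l/2-1/2$, and then managing the prefactor $\lambda^{l/2-1}$ through a single auxiliary parameter $r_0$ so that the resulting threshold scales only with $k$ (rather than with $l$) in a way consistent with the statement.
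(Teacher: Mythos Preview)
Your proposal follows essentially the same route as the paper's proof in Appendix~B: the Chernoff bound $Q(x)\le\tfrac12 e^{-x^2/2}$ for the Gaussian term, the large-argument approximation $K_\nu(z)\approx\sqrt{\pi/(2z)}\,e^{-z}$ for the Bessel term, and then a linear domination of the logarithmic prefactor (the paper writes this as ``find $q_0$ with $\log(\cdot)<q_0\cdot(\cdot)$'' and parameterizes $q_0=\tfrac{1}{2(k/2+r_0-1)}$, which is exactly your tangent-line inequality in disguise). The only point worth flagging is that the paper treats $K_\nu(z)\le\sqrt{\pi/(2z)}e^{-z}$ as an asymptotic relation valid when $\nu\ll z$ (i.e., when $(M-k)\alpha_{\min,l}^2\gg l/2-1/2$), not as a uniform inequality, so your phrasing ``uniform upper bound'' should be softened to match; otherwise the argument and the resulting thresholds $M_1,M_2$ coincide.
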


\begin{proof}
See Appendix B.
\end{proof}
Let $l_i \in\{1,\cdots,k\}$   be the value of $l$ which maximizes $f_i(l)$  as defined in Theorem \ref{lemma_M_conditions} for $i=1,2$.
For  $M_2$, it can be verified that we can  find constants $\eta_0$ and $r_0$ in the defined regimes such that $\frac{8}{(1-2\eta_0)^2} > \frac{2(k/2+r_0 - 1)}{r_0 \eta_0}$ if $k$ is fairly small. Then  the dominant factor of $M_1$ and $M_2$  can be written in the form of $\frac{ \eta_3}{\bar \alpha_{\min}^2} \log(\bar T_0)$  where $\bar{\alpha}_{\min}^2$ and $\bar T_0$ are the  corresponding values of $\alpha_{\min,l}^2$ and $T_0(l)$ when  $l=l_0$ for $l_0 \in\{l_1,l_2\}$ and  $ \eta_3$ is an appropriate constant.
Since, most of the scenarios we are interested in are for the case where $k$ is sufficiently small, we get the minimum number of samples required for reliable subspace recovery  as
\begin{eqnarray}
M \geq k + \frac{ \eta_3}{\bar \alpha_{\min}^2} \log(\bar T_0).
  \end{eqnarray}
It is further noted that $T_0(l)\leq T$ for all $l$ and thus $\bar T_0 \leq T$ where $T$ is the total number of subspaces in the union (\ref{union}).

\subsection{Random sampling}\label{random_sampling}
Next, we consider  the special case where the sampling operator   is a $M\times N$ matrix in which the elements are realizations of a random variable (e.g. Gaussian). Then we have $\mathbf B_i = \mathbf A \mathbf V_i$ in (\ref{obs_2}) where $\mathbf A$ is the random  sampling matrix and   $\mathbf V_i = [\mathbf v_{i0}| \cdots| \mathbf v_{i(k-1)}]$ is the $N\times k$ matrix in which columns consist of the basis vectors of the subspace $\mathcal S_i$ for $i=0,1,\cdots,T-1$. The only term which depends on the sampling operator in the expression for the upper bound on the  probability of error in (\ref{Pe_bound0}) is $\lambda_{j\setminus i}$. When the sampling operator is a random projection matrix, $\lambda_{j\setminus i}$  can be evaluated as follows.
\begin{proposition}\label{prop_1}
Consider that  the sampling matrix $\mathbf A$ consists of elements drawn from a Gaussian ensemble with mean zero and variance 1. When $M-k$ is sufficiently large, we may approximate $\lambda_{j\setminus i}$ as
\begin{eqnarray*}
\lambda_{j\setminus i}\rightarrow  \frac{1}{\sigma_w^2}(M-k) || \sum_{m\in \mathcal W_{j\setminus i}} \mathbf v_{jm}  {c}_{j}(m)||_2^2
\end{eqnarray*}
where as defined before, $\mathcal W_{j\setminus i}$ ($l = |\mathcal W_{j\setminus i}|$) denotes the set consisting of indices of basis vectors in $\mathcal S_j$ which are not in $\mathcal S_i$.
\end{proposition}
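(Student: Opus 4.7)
The plan is to reduce $\lambda_{j\setminus i}$ to the squared norm of a Gaussian vector projected onto a random subspace, then invoke concentration of the chi-squared distribution. First I would rewrite $\mathbf{B}_{j\setminus i}\mathbf{c}_{j\setminus i} = \mathbf{A}\mathbf{u}$ with $\mathbf{u} = \sum_{m \in \mathcal{W}_{j\setminus i}} \mathbf{v}_{jm}\, c_j(m) \in \mathbb{R}^N$, so that $\lambda_{j\setminus i} = \|\mathbf{P}_i^\bot \mathbf{A}\mathbf{u}\|_2^2/\sigma_w^2$. I would then split $\mathbf{u} = \mathbf{V}_i \mathbf{z} + \mathbf{u}_\perp$ into components parallel to and orthogonal to $\mathrm{span}(\mathbf{V}_i)$; because $\mathbf{P}_i^\bot \mathbf{B}_i = \mathbf{0}$, the parallel part is annihilated and $\mathbf{P}_i^\bot \mathbf{A}\mathbf{u} = \mathbf{P}_i^\bot \mathbf{A}\mathbf{u}_\perp$.

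The second step exploits the Gaussianity of $\mathbf{A}$. Since $\mathbf{u}_\perp$ is orthogonal to every column of $\mathbf{V}_i$ in $\mathbb{R}^N$, the jointly Gaussian vectors $\mathbf{A}\mathbf{u}_\perp$ and $\mathbf{B}_i = \mathbf{A}\mathbf{V}_i$ have zero cross-covariance and are therefore independent. Conditional on $\mathbf{B}_i$ (equivalently on $\mathbf{P}_i^\bot$), the vector $\mathbf{A}\mathbf{u}_\perp$ is still distributed as $\mathcal{N}(\mathbf{0},\|\mathbf{u}_\perp\|_2^2\, \mathbf{I}_M)$. Picking an orthonormal basis $\mathbf{U}_i^\bot$ for the $(M-k)$-dimensional null space of $\mathbf{B}_i^*$ and writing $\mathbf{P}_i^\bot = \mathbf{U}_i^\bot (\mathbf{U}_i^\bot)^*$ reduces $\|\mathbf{P}_i^\bot \mathbf{A}\mathbf{u}\|_2^2$ to $\|(\mathbf{U}_i^\bot)^* \mathbf{A}\mathbf{u}_\perp\|_2^2$, which equals $\|\mathbf{u}_\perp\|_2^2$ times a central $\chi^2_{M-k}$ random variable.

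The third step is a standard concentration argument: $\chi^2_{M-k}/(M-k) \to 1$ in probability as $M-k \to \infty$, so replacing the chi-squared variable by its mean gives $\|\mathbf{P}_i^\bot \mathbf{A}\mathbf{u}\|_2^2 \to (M-k)\|\mathbf{u}_\perp\|_2^2$. Dividing by $\sigma_w^2$ then yields the claimed form once $\|\mathbf{u}_\perp\|_2^2$ is identified with $\bigl\|\sum_{m \in \mathcal{W}_{j\setminus i}} \mathbf{v}_{jm} c_j(m)\bigr\|_2^2$.

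The hard part will be justifying that last identification in full generality: strictly speaking only $\|\mathbf{u}_\perp\|_2^2$ arises, because the component of $\mathbf{u}$ lying in $\mathrm{span}(\mathbf{V}_i)$ contributes to $\|\mathbf{u}\|_2^2$ but is killed by $\mathbf{P}_i^\bot$. The approximation as stated is tight precisely when the chosen bases make the directions indexed by $\mathcal{W}_{j\setminus i}$ orthogonal to $\mathcal{S}_i$ — which is automatic for the standard sparsity model and for the block sparsity model of Section \ref{structured_block} whenever the constituent $\mathcal{V}_j$ are mutually orthogonal. Outside those regimes the statement should be read as a first-order approximation that neglects the overlap term $\mathbf{V}_i \mathbf{z}$.
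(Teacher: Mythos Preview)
Your argument follows the same core route as the paper's proof in Appendix~C: identify $\mathbf{B}_{j\setminus i}\mathbf{c}_{j\setminus i}=\mathbf{A}\mathbf{u}$ as an isotropic Gaussian vector in $\mathbb{R}^M$, project onto the $(M-k)$-dimensional null space of $\mathbf{B}_i^*$, recognize a scaled $\chi^2_{M-k}$ variable, and invoke the law of large numbers. The paper simply writes $\mathbf{B}_{j\setminus i}\mathbf{c}_{j\setminus i}\sim\mathcal{N}(\mathbf{0},\|\mathbf{u}\|_2^2\,\mathbf{I}_M)$, applies the eigen-rotation $\mathbf{Q}_i^T$ of $\mathbf{P}_i^\bot$, asserts that the distribution is unchanged because the entries are i.i.d., and concludes $(M-k)\|\mathbf{u}\|_2^2$ directly.

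Where you differ is in explicitly confronting the dependence between $\mathbf{P}_i^\bot$ (a function of $\mathbf{B}_i=\mathbf{A}\mathbf{V}_i$) and $\mathbf{A}\mathbf{u}$. The paper's step ``unitary transformation does not change the distribution'' silently treats $\mathbf{Q}_i$ as deterministic, which is only justified once one knows $\mathbf{A}\mathbf{u}$ and $\mathbf{B}_i$ are independent. Your decomposition $\mathbf{u}=\mathbf{V}_i\mathbf{z}+\mathbf{u}_\perp$ makes this clean: the parallel part is killed by $\mathbf{P}_i^\bot$, and $\mathbf{A}\mathbf{u}_\perp$ is genuinely independent of $\mathbf{B}_i$ because $\mathbf{u}_\perp\perp\mathrm{span}(\mathbf{V}_i)$. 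This yields the sharper limit $(M-k)\|\mathbf{u}_\perp\|_2^2/\sigma_w^2$, and your closing observation is exactly right: the stated form $(M-k)\|\mathbf{u}\|_2^2/\sigma_w^2$ is recovered precisely when the $\mathbf{v}_{jm}$ with $m\in\mathcal{W}_{j\setminus i}$ are orthogonal to $\mathcal{S}_i$, which holds in the block-sparse and standard sparsity settings the paper ultimately cares about. So your proof is not only correct but in fact more rigorous than the paper's on this point; the paper's argument implicitly relies on the same orthogonality without saying so.
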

\begin{proof}
See Appendix C.
\end{proof}

The quantity $\underset{{m\in \mathcal W_{j\setminus i}}}{\sum} \mathbf v_{jm}  {c}_{j}(m) $ is the portion of the original signal $\mathbf x$ that  is unaccounted for by the subspace $\mathcal S_i$ when the true subspace is $\mathcal S_j$ for $j\neq i$. Let $\tilde\alpha_{\min,l}^2 = \frac{1}{\sigma_w^2}\underset{i,j, j\neq i} {\min} ||\underset{{m\in \mathcal W_{j\setminus i}}}{\sum} \mathbf v_{jm}  {c}_{j}(m)  ||_2^2$ be the minimum (over $i,j = 0,1,\cdots, T-1$) SNR of the original signal $\mathbf x$ which is  unaccounted for by the subspace $\mathcal S_i$
 when the true subspace is $\mathcal S_j$ such  that $|\mathcal W_{j\setminus}| = l$ for $j\neq i$.  Then, with random sampling,  the upper bound on the  probability of error of the ML estimation  in (\ref{Pe_bound0}) reduces to (\ref{pe_ML}) after replacing $\alpha_{\min,l}^2 $ in (\ref{pe_ML}) by $\tilde\alpha_{\min,l}^2 $. It is worth mentioning that $\alpha_{\min,l}^2$ in (\ref{pe_ML}) is a measure of SNR after sampling while  $\tilde\alpha_{\min,l}^2 $ is a measure of SNR before sampling the signal.

\section{Subspace Recovery  from Structured Union of Subspaces}\label{structured}
In this section, we simplify the results obtained in Section \ref{ML_decoder}  when the subspaces in the union (\ref{union}) have structured properties leading to block sparsity.
\subsection{Block sparsity}\label{structured1}
With the block sparsity model as discussed in Subsection \ref{structured_block}, the observation vector $\mathbf y$ can be written in the form of
\begin{eqnarray}
\mathbf y = A V \mathbf c + \mathbf w = \mathbf B \mathbf c + \mathbf w \label{obs_block}
\end{eqnarray}
where $\mathbf B= AV$ is a $M\times N$ matrix, $V=[V_0|V_1|\cdots|V_{L-1}]$ is as defined in Subsection \ref{structured_block} and $\mathbf c$ has $L$ blocks (of size $d$ each)  in which all but $k_0$ blocks  are zeros; i.e., $\mathbf c$ is a block $k_0$-sparse vector. Further letting   $\mathbf B[i] = A V_i$ be a $M\times d$ matrix,  we can represent  $\mathbf B$ as a concatenation of column blocks $\mathbf B[i]$  for $i=0,1,\cdots, L-1$.  With this specific structure, the subspace recovery problem reduces to finding  the indices of blocks in $\mathbf c$ such that the elements inside that block are nonzero, i.e., the problem of  finding the block sparsity pattern. In addition to the  structured union of subspaces model considered here in which the block sparsity pattern is observed, there are other instances where  block sparsity arises such as in multiband signals \cite{Mishali1}, and in measurements of gene expression levels \cite{Eldar3}\cite{Parvaresh1}.

Define the support set of the block sparse signal $\mathbf c$ as
\begin{eqnarray*}
\mathcal U := \{ i \in \{0,1,\cdots, L-1\} | \mathbf c_i\neq 0\}
\end{eqnarray*}
which consists of the indices of the subspaces in the sum in (\ref{subspace_1}) or the indices of the  nonzero blocks in $\mathbf c$.
 With the above formulation, there are $ T = {L\choose k_0}$  such support sets and the $j$-th support set is denoted by $\mathcal U_j$ for $j=0,1,\cdots, T-1$.

Given that the true block support set is $\mathcal U_j$,
the measurement vector in (\ref{obs_block}) can be written as,
\begin{eqnarray*}
\mathbf y = \bar{\mathbf  B}_j \bar {\mathbf c}_j + \mathbf w
\end{eqnarray*}
where $\bar{\mathbf  B}_j = A \bar V_j$, $\bar V_j = [V_{u_j^0}| \cdots | V_{u_j^{k_0-1}}]$ and  $u_j^m$ denotes the $m$-th index in the set $\mathcal U_j$ for $m=0,1,\cdots, k_0-1$. Similar interpretation holds for the vector $\bar {\mathbf c}_j$.
To compute  the minimum number of samples  required for  asymptotically reliable  subspace recovery with this structured union of subspaces model based on ML estimation,  we can follow a similar approach as in Theorem \ref{lemma_M_conditions} with appropriate notation changes. In this case,  we can explicitly find $T_0(l)$ required in  Theorem \ref{lemma_M_conditions}. More specifically, for given $l$,  there are ${k_0\choose l}{{L-k_0}\choose l}$ number of sets such that $|\mathcal U_{j\setminus i}| = l$ for any given $\mathcal U_j$.  Then $T_j(l)=T_0(l)={k_0\choose l}{{L-k_0}\choose l}$. In the next section, we extend the analysis to the case where the sampling operator  is represented by random projections.

\subsection{Sampling via random projections}\label{random_block}
We assume  that the signal of interest $\mathbf x$ is a $N\times 1$ vector and the sampling  operator is a $M \times N$ matrix with random elements. Further, assume that the $N\times N$ basis matrix $V$ defined in Section \ref{structured_block} is orthonormal.

When the sampling operator is a $M\times N$ random matrix  $\mathbf A$,  the block sparse observation model in (\ref{obs_block}),  can be rewritten as,
\begin{eqnarray}
\mathbf y  =  \mathbf B \mathbf c + \mathbf w\label{obs_block_random}
\end{eqnarray}
where $\mathbf B = \mathbf A \mathbf V $, $\mathbf V$ is a $N\times N$ orthonormal matrix, $\mathbf c$ is a block sparse signal with $k_0$ nonzero blocks each of length  $d$  and  elements in $\mathbf A$ are drawn from a random ensemble.

 Compared to the analysis in Subsection \ref{random_sampling} with general unions when the sampling operator is a random projection matrix,  with the block sparsity model, we can further simplify the expression obtained  for $\lambda_{j\setminus i}$ in Proposition \ref{prop_1}. We define the minimum nonzero  block SNR as follows:
\begin{definition}
 The minimum nonzero block SNR  is defined as  $\mathrm{BSNR}_{\min} = \underset{m\in \mathcal U}{\min} \frac{|| \mathbf c_m ||_2^2}{\sigma_w^2}$ where $\mathcal U$ is the set containing the indices corresponding to nonzero blocks of the block sparse signal as defined in Section \ref{structured1}.
\end{definition}
\begin{proposition}\label{prop_4}
Let $BSNR_{\min}$ be the minimum nonzero block SNR of a block sparse signal. When the matrix $\mathbf A$ consists of elements drawn from a Gaussian ensemble with mean zero and variance 1,  for any $\mathcal U_j$ and $\mathcal U_i$ with $l = |\mathcal U_{j\setminus i}|$ we have,
\begin{eqnarray*}
 \lambda_{j\setminus i} =  \frac{1}{\sigma_w^2}(M-k_0d) \sum_{m=0}^{l-1}|| \mathbf V_{u_{j\setminus i}^m}  {\mathbf c}_{u_{j\setminus i}^m}||_2^2   \geq  {(M-k_0d)l} \mathrm{BSNR}_{\min}
\end{eqnarray*}
where $u_{j\setminus i}^m$ denotes the $m$-th index of the set $\mathcal U_{j\setminus i}$ which contains the indices of the subspaces in $\mathcal U_j$ which are not in $\mathcal U_i$.
\end{proposition}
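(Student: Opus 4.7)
My plan is to derive Proposition \ref{prop_4} as a specialization of Proposition \ref{prop_1} to the structured block sparsity setting, exploiting the orthonormality of $\mathbf{V}$ to simplify the resulting expression and to obtain the stated bound in terms of $\mathrm{BSNR}_{\min}$.

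First, I would invoke Proposition \ref{prop_1} directly. In the block sparsity setting, the ambient subspaces $\mathcal{S}_j$ and $\mathcal{S}_i$ have dimension $k=k_0 d$, so the approximation from Proposition \ref{prop_1} gives
\begin{eqnarray*}
\lambda_{j\setminus i} \rightarrow \frac{M-k_0 d}{\sigma_w^2}\, \Bigl\| \sum_{n \in \mathcal{W}_{j\setminus i}} \mathbf{v}_{jn}\, c_j(n) \Bigr\|_2^2,
\end{eqnarray*}
where $\mathcal{W}_{j\setminus i}$ collects the column indices of $\bar{\mathbf{V}}_j$ that do \textbf{not} lie in the range of $\bar{\mathbf{V}}_i$.

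The next step is the key translation of index sets. Since the subspaces $\{\mathcal{V}_j\}_{j=0}^{L-1}$ are disjoint and $\mathbf{V}$ is orthonormal, an entire block $\mathbf{V}_{u_j^m}$ lies in $\mathcal{R}(\bar{\mathbf{V}}_i)$ if and only if $u_j^m \in \mathcal{U}_i$. Consequently $\mathcal{W}_{j\setminus i}$ is precisely the union of the $d$ columns for each block index in $\mathcal{U}_{j\setminus i}$, so $|\mathcal{W}_{j\setminus i}| = dl$ and the sum over $n$ reorganizes cleanly into a sum over the $l$ block indices in $\mathcal{U}_{j\setminus i}$:
\begin{eqnarray*}
\sum_{n \in \mathcal{W}_{j\setminus i}} \mathbf{v}_{jn}\, c_j(n) = \sum_{m=0}^{l-1} \mathbf{V}_{u_{j\setminus i}^m}\, \mathbf{c}_{u_{j\setminus i}^m}.
\end{eqnarray*}

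Now I would use the orthonormality of $\mathbf{V}$ a second time: since distinct blocks $\mathbf{V}_{u_{j\setminus i}^m}$ and $\mathbf{V}_{u_{j\setminus i}^{m'}}$ have mutually orthogonal column spaces, the cross terms vanish and the squared norm of the sum equals the sum of squared norms, yielding
\begin{eqnarray*}
\lambda_{j\setminus i} = \frac{M-k_0 d}{\sigma_w^2} \sum_{m=0}^{l-1} \|\mathbf{V}_{u_{j\setminus i}^m}\, \mathbf{c}_{u_{j\setminus i}^m}\|_2^2.
\end{eqnarray*}
Finally, the lower bound follows by noting that each index $u_{j\setminus i}^m$ is an element of $\mathcal{U}_j$ (the support of the true block sparse signal), that $\mathbf{V}_{u_{j\setminus i}^m}$ has orthonormal columns so $\|\mathbf{V}_{u_{j\setminus i}^m}\, \mathbf{c}_{u_{j\setminus i}^m}\|_2^2 = \|\mathbf{c}_{u_{j\setminus i}^m}\|_2^2$, and that by the definition of $\mathrm{BSNR}_{\min}$ each such squared norm is at least $\sigma_w^2\, \mathrm{BSNR}_{\min}$. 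Summing $l$ such terms produces the claimed bound $(M-k_0d)\, l\, \mathrm{BSNR}_{\min}$.

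The only delicate step is the bookkeeping in the index-set translation: one needs to confirm that the block structure together with orthonormality makes the two notions of ``unaccounted column indices'' and ``unaccounted block indices'' coincide, and that orthonormality (rather than mere disjointness) is what lets the squared norm of the sum decouple. Both facts are consequences of assuming $\mathbf{V}$ orthonormal, as explicitly done in Section \ref{random_block}, so no additional hypotheses are needed beyond those already in place.
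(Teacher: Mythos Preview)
Your proposal is correct and follows essentially the same route as the paper: invoke Proposition \ref{prop_1}, use the orthogonality of the columns of $\mathbf V$ to eliminate the cross terms so that the squared norm of the block sum decomposes as a sum of block norms, and then lower bound each block term by $\sigma_w^2\,\mathrm{BSNR}_{\min}$. Your write-up is in fact slightly more explicit than the paper's in spelling out the index-set translation and the step $\|\mathbf V_{u}\mathbf c_{u}\|_2^2=\|\mathbf c_{u}\|_2^2$, but these are exactly the ingredients the paper relies on.
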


\begin{proof}
Proof follows from Proposition \ref{prop_1} and the following results:
\begin{eqnarray}
||\sum_{m=0}^{l-1} \mathbf V_{u_{j\setminus i}^m}  {\mathbf c}_{u_{j\setminus i}^m}  ||_2^2 &=&  \langle \sum_{m=0}^{l-1} \mathbf V_{u_{j\setminus i}^m}  {\mathbf c}_{u_{j\setminus i}^m}, \sum_{m=0}^{l-1} \mathbf V_{u_{j\setminus i}^m}  {\mathbf c}_{u_{j\setminus i}^m} \rangle\nonumber\\
& = & \sum_{m=0}^{l-1} \langle  \mathbf V_{u_{j\setminus i}^m}  {\mathbf c}_{u_{j\setminus i}^m}, \mathbf V_{u_{j\setminus i}^m}  {\mathbf c}_{u_{j\setminus i}^m} \rangle  \nonumber\\
&+& \sum_{m\neq t} \langle  \mathbf V_{u_{j\setminus i}^m}  {\mathbf c}_{u_{j\setminus i}^m}, \mathbf V_{u_{j\setminus i}^t}  {\mathbf c}_{u_{j\setminus i}^t} \rangle\nonumber\\
&=& \sum_{m=0}^{l-1}|| \mathbf V_{u_{j\setminus i}^m}  {\mathbf c}_{u_{j\setminus i}^m}||_2^2\label{v_mc_m}
\end{eqnarray}
where the last equality is due to  the fact that the columns of $\mathbf V$ are orthogonal. Then (\ref{v_mc_m}) is lower bounded  by,
\begin{eqnarray*}
||\sum_{m=0}^{l-1} \mathbf V_{u_{j\setminus i}^m}  {\mathbf c}_{u_{j\setminus i}^m}  ||_2^2 \geq \sigma_w^2 l \mathrm{BSNR}_{\min}
\end{eqnarray*}
which completes  the proof.

\end{proof}

\begin{corollary}\label{corollary_3}
When the sampling operator is a random projection matrix where the elements are drawn from a Gaussian ensemble with mean zero and the variance $1$,  the upper bound on the  probability of error of  the ML estimation  in (\ref{Pe_bound0}) for block sparsity pattern recovery reduces  to,
\begin{eqnarray}
&~&P_e \leq  \sum_{l=1}^{k_0} {k_0 \choose l} {{L-k_0} \choose l} \nonumber\\
&~&\left( Q \left(\frac{1}{2}(1 - 2\eta_0 )\sqrt{ {(M-k)l} \mathrm{BSNR}_{\min}} \right)
+  \Psi\left({l}, \mathrm{BSNR}_{\min} \right)\right) \label{Pe_ML_random}
\end{eqnarray}
where $k_0=k/d$,  $\Psi\left({l}, \mathrm{BSNR}_{\min} \right) =  \frac{\sqrt 2}{2^{l}\Gamma(l/2)}(\eta_0{(M-k_0d)l} \mathrm{BSNR}_{\min} )^{l/2-1/2} \\ K_{l/2-1/2}(\eta_0{(M-k_0d)l} \mathrm{BSNR}_{\min}/2)$  and $0<\eta_0<1/2$.
\end{corollary}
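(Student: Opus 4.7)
The plan is to derive \eqref{Pe_ML_random} by specializing the general upper bound of Theorem \ref{theorem1} to the block sparsity setting and then substituting the explicit lower bound on $\lambda_{j\setminus i}$ provided by Proposition \ref{prop_4}. First I would reorganize the double sum in Theorem \ref{theorem1} by grouping the index pairs $(i,j)$ according to the overlap parameter $l=|\mathcal U_{j\setminus i}|$. For each fixed block support $\mathcal U_j$ and each $l\in\{1,\dots,k_0\}$, the number of alternative block supports $\mathcal U_i$ satisfying $|\mathcal U_{j\setminus i}|=l$ is exactly $\binom{k_0}{l}\binom{L-k_0}{l}$, since one chooses $l$ indices to drop from the $k_0$ active blocks and $l$ new indices from among the $L-k_0$ inactive ones (this is the same counting argument already used for the standard sparsity model just before \eqref{T_0}). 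Crucially, this count is independent of $\mathcal U_j$, so $T_j(l)=T_0(l)=\binom{k_0}{l}\binom{L-k_0}{l}$, and the outer average $\frac{1}{T}\sum_{j=0}^{T-1}$ of Theorem \ref{theorem1} cancels exactly against the sum over $j$.

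Next I would invoke monotonicity to introduce $\mathrm{BSNR}_{\min}$. Since $Q(x)$ is non-increasing in $x$ (so $Q(c\sqrt{\lambda})$ is non-increasing in $\lambda>0$) and $\Psi(l,\cdot)$ is non-increasing in its second argument (as already exploited to pass from \eqref{Pe_bound0} to \eqref{pe_ML1}), the bound only grows if we replace each instance of $\lambda_{j\setminus i}$ by the uniform lower bound $(M-k_0 d)\,l\,\mathrm{BSNR}_{\min}$ supplied by Proposition \ref{prop_4}. Recalling $k=k_0 d$ and collecting the resulting terms yields precisely the form of \eqref{Pe_ML_random}.

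The main obstacle here is purely bookkeeping: one must verify that the lower bound from Proposition \ref{prop_4} applies uniformly over all pairs $(i,j)$ that realize a given overlap $l$, so that the term-by-term monotone substitution is legitimate. This holds because the right-hand side of $\lambda_{j\setminus i}\ge (M-k_0 d)\,l\,\mathrm{BSNR}_{\min}$ depends on $(i,j)$ only through $l$ and through the universal constant $\mathrm{BSNR}_{\min}$; no residual dependence on the particular pair survives. Beyond this, no new analytic machinery is needed: all the heavy lifting has been done in Theorem \ref{theorem1}, Proposition \ref{prop_4}, and the combinatorial counting from Subsection \ref{structured1}.
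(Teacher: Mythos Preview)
Your proposal is correct and mirrors exactly the (implicit) argument the paper uses: specialize Theorem~\ref{theorem1} to the block sparse setting, use the count $T_j(l)=T_0(l)=\binom{k_0}{l}\binom{L-k_0}{l}$ from Subsection~\ref{structured1}, and then replace each $\lambda_{j\setminus i}$ by the uniform lower bound $(M-k_0d)\,l\,\mathrm{BSNR}_{\min}$ from Proposition~\ref{prop_4} via the monotonicity of $Q$ and $\Psi$. There is no additional content in the paper's treatment beyond what you have outlined.
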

Next, we investigate  sufficient conditions which state how the number of samples  $M$ scales with the other parameters $(L, k_0, d, \mathrm{BSNR}_{\min})$ to ensure that  the probability of error in (\ref{Pe_ML_random}) vanishes asymptotically with the block sparse model (\ref{obs_block_random}).

\begin{lemma}\label{lemma_3}
When $(M-k)\mathrm{BSNR}_{\min}\rightarrow \infty$, the probability of error of the ML estimation  (\ref{Pe_ML_random}) in recovering the block sparsity pattern vanishes asymptotically if the following conditions are satisfied:
\begin{eqnarray}
M > k +
\max\{\bar M_1, \bar M_2\} \label{M}
\end{eqnarray}
where \begin{eqnarray}
\bar M_1 = \frac{16}{BSNR_{\min} (1-2\eta_0)^2}(\log (L-k_0) + \log\left(\frac{e}{\sqrt 2}\right))
\end{eqnarray}
\begin{eqnarray}
\bar M_2 = \frac{4 (k_0/2 + r_0 -1)}{\eta_0r_0\mathrm{BSNR}_{\min}} \left\{\log(L-k_0) + \frac{1}{2} \log\left(\frac{2b_0e^2}{\sqrt \pi}\right)\right\}\label{M2}
\end{eqnarray}
with  $0 < \eta_0 < \frac{1}{2}$, $r_0 > 0$ and $b_0 =\frac{\sqrt{2\pi}}{4}$ are constants.
\end{lemma}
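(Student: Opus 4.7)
The plan is to specialize the general sufficient condition in Theorem \ref{lemma_M_conditions} to the block sparse setting of Corollary \ref{corollary_3}. Two replacements do the work. First, Proposition \ref{prop_4} gives $\lambda_{j\setminus i}\ge (M-k_0 d)\,l\,\mathrm{BSNR}_{\min}$ whenever $|\mathcal U_{j\setminus i}|=l$, so the effective value of $\alpha_{\min,l}^2$ in Theorem \ref{lemma_M_conditions} is $l\,\mathrm{BSNR}_{\min}$. Second, by the combinatorial discussion at the end of Subsection \ref{structured1}, one has $T_0(l)=\binom{k_0}{l}\binom{L-k_0}{l}$ exactly, and the index $l$ ranges only over $\{1,\ldots,k_0\}$. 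Inserting both replacements into the expressions for $M_1$ and $M_2$ of Theorem \ref{lemma_M_conditions} reduces the lemma to analyzing
\begin{equation*}
\max_{1\le l\le k_0}\frac{c_i\bigl(\log\bigl[\binom{k_0}{l}\binom{L-k_0}{l}\bigr]+\log\beta_i\bigr)}{l\,\mathrm{BSNR}_{\min}}
\end{equation*}
with $(c_1,\beta_1)=(8/(1-2\eta_0)^2,\,1/2)$ for the $Q$-function term and $(c_2,\beta_2)=(2(k_0/2+r_0-1)/(r_0\eta_0),\,2b_0/\sqrt\pi)$ for the $\Psi$ term. The change $k\mapsto k_0$ in $c_2$ reflects the fact that the relevant range of $l$ is $\{1,\ldots,k_0\}$ in the block model rather than $\{1,\ldots,k\}$.

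The first maximum is straightforward: the factor $l$ in the denominator cancels against the $l$-th powers pulled out of the binomials by the standard bound $\binom{n}{l}\le (en/l)^{l}$. Concretely, $\tfrac{1}{l}\log\bigl[\binom{k_0}{l}\binom{L-k_0}{l}\bigr]\le \log(e^{2}k_0(L-k_0)/l^{2})$ is nonincreasing in $l$, so the maximum is attained at $l=1$; applying the mild bound $k_0\le L-k_0$ then collapses the $\log k_0$ contribution into a doubling of the leading coefficient. This yields $\bar M_1$ in the form (\ref{M}), the factor $16=2\cdot 8$ coming precisely from this doubling.

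The second maximum is the real work, and I expect it to be the main obstacle. Using the large-argument asymptotic $K_\nu(x)\sim\sqrt{\pi/(2x)}\,e^{-x}$, valid because $(M-k)\mathrm{BSNR}_{\min}\to\infty$ by hypothesis, one obtains
\begin{equation*}
\Psi(l,\lambda)\;\lesssim\;\frac{\sqrt{2\pi}}{2^{l}\Gamma(l/2)}(\eta_0\lambda)^{l/2-1}e^{-\eta_0\lambda/2},
\end{equation*}
with $\lambda=(M-k)l\,\mathrm{BSNR}_{\min}$. The polynomial prefactor $\lambda^{l/2-1}$ grows with $M$ and would erode the exponential decay; the standard device is the elementary inequality $\lambda^{a}\le (a/(\epsilon e))^{a}\,e^{\epsilon\lambda}$ applied with $a=l/2-1$ and a careful choice of $\epsilon$ parameterized by the free $r_0>0$, which leaves an effective exponential rate $\eta_0 r_0/(2(k_0/2+r_0-1))$. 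Cancelling the outer $l$ as in the first case and using $\tfrac{1}{l}\log\bigl[\binom{k_0}{l}\binom{L-k_0}{l}\bigr]\le \log(L-k_0)+O(1)$ produces $\bar M_2$ in the form (\ref{M2}). Because the outer sum in Corollary \ref{corollary_3} has only $k_0$ terms, term-wise convergence to zero suffices to kill the sum, and taking $M>k+\max\{\bar M_1,\bar M_2\}$ completes the proof. The heart of the argument is the $r_0$-absorption step; this is precisely the device used in Appendix B to establish Theorem \ref{lemma_M_conditions}, so the present proof is essentially a bookkeeping specialization of that one with the two replacements above.
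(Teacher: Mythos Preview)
Your proposal is correct and follows essentially the same route as the paper: specialize Theorem~\ref{lemma_M_conditions} with $\alpha_{\min,l}^2=l\,\mathrm{BSNR}_{\min}$ and $T_0(l)=\binom{k_0}{l}\binom{L-k_0}{l}$, then bound the binomials via $\binom{k_0}{l}\le\binom{L-k_0}{l}$ (for $k_0\le L/2$) and $\log\binom{L-k_0}{l}\le l\log(e(L-k_0)/l)$, taking the maximum at $l=1$. Your re-derivation of the $r_0$-absorption step is unnecessary since it is already packaged into Theorem~\ref{lemma_M_conditions}, but as you yourself note at the end, the proof is just bookkeeping once the two replacements are in place.
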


\begin{proof}
Proof follows from  Theorem \ref{lemma_M_conditions} and using the relations, that ${k_0\choose l}\leq {{L-k_0}\choose l} $ for $k_0\leq L/2$, and $\log ({{L-k_0}\choose l}) \leq l \log \left(\frac{e(L-k_0)}{l}\right)$.

\end{proof}

From Lemma \ref{lemma_3}, we can write the minimum  number of random samples required for reliable  block sparsity pattern recovery asymptotically in the form of $\mathcal O(k + \frac{\eta_4}{ BSNR_{\min}}\log(L-k_0))$  for some constant $\eta_4$   in the case where $k_0$ is sufficiently small.

\begin{remarks}
When $BSNR_{\min}\rightarrow \infty$,  $M > k$ measurements are sufficient for asymptotically  reliable block sparsity pattern recovery   with ML estimation.
\end{remarks}

\subsection{Revisiting the standard sparsity model }
In the standard sparsity model considered widely in the CS literature, the subspaces in the union (\ref{union}) are assumed  to be $k$-dimensional subspaces of an orthonormal basis. This is a special case of the block sparse model when  $d=1$.    To have a fair comparison to  the performance of the ML estimation in the presence of noise with the standard sparsity model and block sparsity model, we introduce further notations. Define the minimum component SNR, $\mathrm{CSNR}_{\min} = \underset{m\in \mathcal U, i=0,\cdots,d-1}{\min} \frac{|| \mathbf c_m(i) ||_2^2}{\sigma_w^2}$ so that $BSNR_{\min} \geq d CSNR_{\min}$.  Then, when the sampling is performed via random projections, the probability of error of the ML estimation  with the standard sparsity model is upper bounded as in (\ref{Pe_ML_random_standard})
\begin{figure*}
\begin{eqnarray}
P_e \leq  \sum_{l=1}^{k} {k\choose l} {{N-k} \choose l}\left( Q \left(\frac{1}{2}(1 - 2\eta_0 )\sqrt{ {(M-k)l} \mathrm{CSNR}_{\min}} \right) +  \Psi\left({l}, \mathrm{CSNR}_{\min} \right)\right) \label{Pe_ML_random_standard}
\end{eqnarray}
\end{figure*}
where $ \Psi\left({l}, \mathrm{CSNR}_{\min} \right)$ is as defined in Corollary \ref{corollary_3}.
With these notations, the probability of error of the ML estimation  with block sparsity model (\ref{Pe_ML_random}) can be rewritten as in (\ref{Pe_ML_random_0})
\begin{figure*}
\begin{eqnarray}
P_e \leq  \sum_{l=1}^{k_0} {k_0 \choose l} {{L-k_0} \choose l}\left( Q \left(\frac{1}{2}(1 - 2\eta_0 )\sqrt{ {(M-k)l} d\mathrm{CSNR}_{\min}} \right) +  \Psi\left({l}, d\mathrm{ CSNR}_{\min} \right)\right)\label{Pe_ML_random_0}
\end{eqnarray}
\end{figure*}
where $L = N/d$ and $k_0 = k/d$ as defined previously.
By obtaining the conditions under which   $P_e$ in (\ref{Pe_ML_random_standard}) and (\ref{Pe_ML_random_0}) vanishes asymptotically, it can be shown that the dominant part of the required number of random samples for reliable subspace recovery  asymptotically in the presence of noise  can be expressed in the form of $\mathcal O(k + \frac{1}{d}\frac{\hat {\eta_1}}{CSNR_{\min}}\log(L-k_0))$ with block sparsity model and $\mathcal O(k + \frac{\hat \eta_2}{ CSNR_{\min}}(\log(N-k) ))$ with the standard      sparsity model where $ \hat {\eta}_1$ and $\hat {\eta}_2$ are positive constants. Thus, when the signal $\mathbf x$ exhibits block sparsity pattern with $k= k_0 d$  where $k$ is the  total number of non zero coefficients of the  sparse signal, $k_0$ is the number of blocks and $d$ is the block size,  the required number of random samples beyond $k$ (i.e. in  terms of $M-k$) is reduced  by approximately a factor of $d$  compared to that with the standard sparsity model.
Note  that the above analysis is for the worst case, i.e. the upper bounds on the probability of error  are obtained considering the minimum block/component SNR. The actual number of measurements required for reliable subspace recovery  can be less than that  predicted  in Lemma \ref{lemma_3}.

\section{Comparison with  Existing Results}\label{discussion}

\subsection{Existing results for support recovery  with the standard sparsity model }
The most related existing work on deriving sufficient conditions for the  ML estimation   to succeed in the presence of noise with the standard sparsity model is presented in \cite{wain2}. There, taking the canonical basis as the sparsifying basis,  the results are derived based on the following bound on  the probability of error:
\begin{eqnarray}
P_e \leq \sum_{l=1}^k {k\choose l}{{N-k}\choose l} 4 \exp \left\{-\frac{(M-k) l CSNR_{\min}}{64(l CSNR_{\min} + 8)}\right\}. \label{Pe_wain}
\end{eqnarray}
When  $CSNR_{\min}\rightarrow \infty$, it can be easily seen that this upper bound is bounded away from zero (i.e. it is bounded by $4 e^{-(M-k)/64} \left( {Ld\choose k} - 1 \right) > 0$).
Based on the upper bound (\ref{Pe_wain}),  it was shown in \cite{wain2} that
\begin{eqnarray}
M &>& k + (\eta_1  + 2048 )\max \left\{\tilde M_1=\log\left({N-k \choose k}\right)\right., \nonumber\\
&~&\left.\tilde M_2=\frac{\log(N - k)}{CSNR_{\min}}\right\}\label{M_wain}
 \end{eqnarray}measurements are required for asymptotically  reliable sparsity pattern recovery  where $\eta_1$ is a constant (which is different from the one used earlier in the paper). When the minimum  component SNR, $CSNR_{\min} \rightarrow \infty$, the ML estimation  requires $k+(\eta_1 + 2048) k\log((N-k)/k)$ measurements for asymptotically  reliable recovery, which is much larger than $k$. However, as shown in \cite{Baron1,Fletcher1},  when the measurement noise power is negligible (or in the no noise case), the exhaustive search  decoder  is capable of recovering the sparsity pattern with $M =k+1$  measurements with high probability. Thus, the limits predicted by the existing results in the literature  for sparsity pattern recovery in terms of the minimum number of measurements  show a gap with  what is actually required. When $d=1$,  $\mathbf V$ is the standard canonical basis, and $\mathbf A$ is a random Gaussian matrix, the structured union of subspaces model  considered in Section \ref{structured} (specifically the equation (\ref{obs_block_random})) is the same as the model considered in \cite{wain2}. Our results show that when $CSNR_{\min}\rightarrow \infty$, the upper bound on the probability of error in (\ref{Pe_ML_random_standard}) vanishes  with the standard sparsity model when $M > k$.  More specifically, when $CSNR_{\min}\rightarrow \infty$, our results imply  that $\mathcal O(k)$ measurements are sufficient for  asymptotically reliable   sparsity pattern recovery with the ML estimation which is intuitive. Further, at finite $CSNR_{\min}$,  when $\tilde M_2$ dominates  $\tilde M_1$ in (\ref{M_wain}) the lower bound  in \cite{wain2} has the same scaling with respect to $L$, $k$, $d$ and $CSNR_{\min}$ to that is obtained in this paper  with the standard sparsity model.

\subsection{Existing results for signal recovery with  union of subspaces}
The problem of stable recovery of signals that lie in a union of subspaces model is addressed  in \cite{Eldar1,Baraniuk4,Lu1,Blumensath1,Duarte2}.  In these works, the main focuss is to derive sufficient conditions  that ensure reliable recovery of the complete   signals while in this paper, our focus is only in identifying the low dimensional subspace in which  the signal lies.
Nevertheless, it is interesting to compare the results since it will provide insights into identifying the regions of the parameters ($L$, $k$, $SNR$, etc..) that ensure asymptotically  reliable subspace recovery  using  the existing algorithms developed for exact signal recovery.   

The following result is shown in  \cite{Blumensath1}.
\begin{theorem}[\cite{Blumensath1}]\label{theorem_Blu}
For any given $t>0$, if
\begin{eqnarray}
M > \frac{2}{c \delta} \left(\log(2T) + k \log\left(\frac{12}{\delta}\right)  + t \right)\label{M_min}
\end{eqnarray}
then, the matrix $A$ in (\ref{obs_1}) satisfies the restricted isometry property (RIP) with the restricted isometry  constant $\delta$ (for formal definition of RIP readers may refer to \cite{Blumensath1}).
\end{theorem}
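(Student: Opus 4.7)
The plan is to prove the RIP bound via the three-step route that is standard for random operators restricted to a union of low-dimensional subspaces: a pointwise concentration inequality for $\mathbf{A}$, an $\epsilon$-net on each subspace in the union (\ref{union}), and a union bound across all $T$ subspaces.

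First, I would start with a pointwise tail bound on the random sampling operator. For a sub-Gaussian (in particular, Gaussian) $\mathbf{A}$ with appropriate normalization, one has
\begin{eqnarray*}
\Pr\bigl( \bigl| \|\mathbf{A}\mathbf{x}\|_2^2 - \|\mathbf{x}\|_2^2 \bigr| > \delta \|\mathbf{x}\|_2^2 \bigr) \leq 2\, e^{-c M \delta / 2}
\end{eqnarray*}
for any fixed $\mathbf{x}$ and $\delta \in (0,1)$, where $c > 0$ is the constant appearing in (\ref{M_min}). This Bernstein-type concentration is the only probabilistic input; the rest of the argument is deterministic covering combined with union bounding.

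Second, I would fix a single subspace $\mathcal{S}_i$ in (\ref{union}). Since $\dim \mathcal{S}_i = k$, a standard volume argument produces a $(\delta/4)$-net $\mathcal{N}_i$ for the unit sphere of $\mathcal{S}_i$ of cardinality at most $(12/\delta)^k$. Applying the concentration inequality at deviation level $\delta/2$ to each $\mathbf{u} \in \mathcal{N}_i$ and union-bounding over the net, I obtain that
\begin{eqnarray*}
\bigl| \|\mathbf{A}\mathbf{u}\|_2^2 - 1 \bigr| \leq \delta/2 \qquad \text{for all } \mathbf{u} \in \mathcal{N}_i
\end{eqnarray*}
holds with failure probability at most $2 (12/\delta)^k e^{-c M \delta / 2}$. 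A short recursive approximation argument (expand any unit $\mathbf{x} \in \mathcal{S}_i$ as a geometric series of successive $(\delta/4)$-net approximations and sum) then lifts the $\delta/2$-isometry on the net to the full $\delta$-isometry on all of $\mathcal{S}_i$.

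Third, I would take a final union bound over all $T$ subspaces in the union; combining the two-sided tail prefactor with $T$ produces the $2T$ appearing inside the logarithm of (\ref{M_min}). The total failure probability is then at most $2T (12/\delta)^k e^{-c M \delta / 2}$. Requiring this to be at most $e^{-t}$ and taking logarithms, I solve for $M$ and recover the stated bound $M > \frac{2}{c\delta}\bigl(\log(2T) + k \log(12/\delta) + t\bigr)$. The main technical obstacle, in my view, is the concentration input itself: obtaining the $1/\delta$ (rather than $1/\delta^2$) prefactor requires a chi-squared / Bernstein-type tail bound valid on $\delta \in (0,1)$, since a Hoeffding-type bound alone would give the weaker $1/\delta^2$ rate. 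The covering construction and the net-to-sphere extension are otherwise routine, modulo bookkeeping of constants to land exactly on $12/\delta$ inside the log and on the prefactor $2/c$ outside.
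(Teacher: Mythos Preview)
The paper does not actually prove this theorem: it is quoted verbatim from \cite{Blumensath1} as an existing result used only for comparison in Section~\ref{discussion}, so there is no ``paper's own proof'' to compare against. Your outline is nonetheless the standard and correct route---concentration for a fixed vector, an $\epsilon$-net of size $(12/\delta)^k$ on the unit sphere of each $k$-dimensional subspace, extension from net to sphere, and a union bound over the $T$ subspaces---and this is precisely the argument given in \cite{Blumensath1} (itself adapted from the Baraniuk--Davenport--DeVore--Wakin proof for ordinary sparse vectors).

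One minor remark on your bookkeeping: the concentration input used in \cite{Blumensath1} is stated in the form
\[
\Pr\!\left(\bigl|\,\|\mathbf{A}\mathbf{x}\|_2^2-\|\mathbf{x}\|_2^2\,\bigr|>\delta\|\mathbf{x}\|_2^2\right)\le 2e^{-cM\delta^2},
\]
valid for $0<\delta<1$, and the subsequent net argument is run at level $\delta/2$ so that one picks up a factor $e^{-cM\delta^2/4}$ rather than $e^{-cM\delta/2}$. Depending on which version of the constant $c$ one tracks (Blumensath's $c$ already absorbs a $\delta$), the final solved inequality can appear either as $2/(c\delta)$ or $2/(c\delta^2)$; the statement as written in this paper uses the $2/(c\delta)$ normalization. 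Your worry about needing a Bernstein-type tail to get $1/\delta$ rather than $1/\delta^2$ is therefore partly a matter of how $c$ is defined, not a genuine obstacle.
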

In \cite{Eldar3}, the authors  derived the sufficient  conditions for complete signal recovery in the block sparsity model.  When the samples are acquired via random projections (elements in $\mathbf A$ are Gaussian)  with  the notations used in Section \ref{random_block}, the minimum number of samples required for the sampling matrix to satisfy  block RIP  with high probability   is given by (from Theorem \ref{theorem_Blu} and \cite{Eldar3})
\begin{eqnarray}
M \geq \frac{36}{7 \delta} \left(\log\left(2 {L\choose k_0 }\right) + k \log\left(\frac{12}{\delta}\right) + t\right)\label{M_Eldar}
\end{eqnarray}
for some $t>0$ and $0 < \delta < 1$ is the restricted isometry constant. This is roughly in the order of $\tilde{\eta}_1 k+\tilde{\eta}_2 k_0\log (L/k_0)$ for some positive constants $\tilde{\eta}_1$ and $\tilde{\eta}_2$. Thus, block sparse signals can be reliably recovered using computationally tractable algorithms (e.g. extension of BP - mixed  $l_2/l_1$  norm recovery algorithms)  with $\tilde{\eta}_1 k+\tilde{\eta}_2 k_0\log (L/k_0)$ measurements when there is no noise. In the presence of noise, the BP based   algorithm  developed in \cite{Eldar1} is shown to be robust so  that the norm of the recovery error
is bounded by the noise level.  As shown in Section \ref{random_block},   it requires roughly the order of $k+ ( \eta_4 /BSNR_{\min} )\log(L-k_0)$ measurements (when $k_0$ is fairly small) for  reliable block sparsity pattern recovery with ML estimation.   Here, the second term is significant at finite $BSNR_{\min}$ while it vanishes when  $BSNR_{\min}\rightarrow \infty$. At finite $BSNR_{\min}$, when $k_0$ is sublinear w.r.t. $L$, it can be shown that $ k_0\log (L/k_0) >> \log(L-k_0)$. Thus, in that region of $k_0$, the relevant scaling obtained in (\ref{M_Eldar}) is larger than   what is required by the optimal ML estimation  derived in this paper at finite $BSNR_{\min}$. The exact difference between them depends on the  value of $BSNR_{\min}$ and the relevant constants.

\section{Numerical Results}\label{numerical}
Several computationally tractable algorithms  for sparsity pattern recovery with standard sparsity  have been  derived and discussed quite extensively in the literature.
Extensions of  such algorithms for  model based  or structured CS have  also been considered in several recent works. For example, extensions of CoSamp and iterative hard thresholding algorithms for model based CS were considered in \cite{Baraniuk4}. Extensions of OMP algorithm for block sparsity pattern recovery (BOMP) were considered in \cite{Eldar3,Fang1} while \cite{Eldar1,Lv1,Friedman1} considered the Group Lasso algorithm  for block sparse signal recovery.

Our goal in this section is to validate the tightness of the derived upper bounds on the probability of error of the ML estimation   and  provide numerical results to illustrate the performance gap  when employing  practical algorithms for subspace recovery.    Simulating the ML algorithm is difficult due to its high computational complexity in the high dimensions.  Nevertheless, we show the performance for reasonably sized signal dimensions and samples just to demonstrate the tightness of the probability of error bound. For the structured union of subspaces model considered in Section \ref{structured1}, the problem reduces to recovering  the block sparsity pattern of a block sparse signal.
The performance of the ML algorithm is compared to  block-OMP as proposed in \cite{Eldar3} which is provided  in Algorithm \ref{algo1} where the set $\hat{\mathcal U}$ contains the estimated indices of the nonzero blocks of a  block sparse signal.
\begin{algorithm}
Input: $\mathbf y$, $\mathbf B$, $k_0$
\begin{enumerate}
\item Initialize $t=1$, $\hat{\mathcal U}(0) = \emptyset $, residual vector $\mathbf r_{0} = \mathbf y$
\item Find the index $\lambda(t)$ such that
$
\lambda(t) = \underset{i = 0,\cdots,L-1}{\arg~ \max} ~||\mathbf B[i]^{*} \mathbf r_{t-1}||_2
$
 \item  Set $\hat{\mathcal U}(t) = \hat{\mathcal U}(t-1) \cup \{\lambda (t)\}$
\item  Compute the projection operator $\mathbf P(t) = \mathbf B(\hat{\mathcal U}(t)) \left( \mathbf B(\hat{\mathcal U}(t)) ^T  \mathbf B(\hat{\mathcal U}(t)) \right)^{-1}  \mathbf B(\hat{\mathcal U}(t)) ^T$. Update the residual vector:  $\mathbf r_{t} = (\mathbf I - \mathbf P(t))\mathbf y$ (note: $\mathbf B(\hat{\mathcal U}(t))$ denotes the submatrix of $\mathbf B$ in which columns are taken from $\mathbf B$ corresponding to the indices in $\hat{\mathcal U}(t)$)
     \item  Increment $t=t+1$ and go to step 2 if $t\leq k_0$, otherwise, stop and set $\hat{\mathcal U} = \hat{\mathcal U}(t-1)$

 \end{enumerate}
 \caption{Block-OMP (B-OMP) for block sparsity pattern recovery }\label{algo1}
 \end{algorithm}

\begin{figure}
    \centering
     \subfigure[$N=50$, $L=25$,  $d=2$, $BSNR_{\min}=13 dB$ ]
    {
        \includegraphics[width=3.50in]{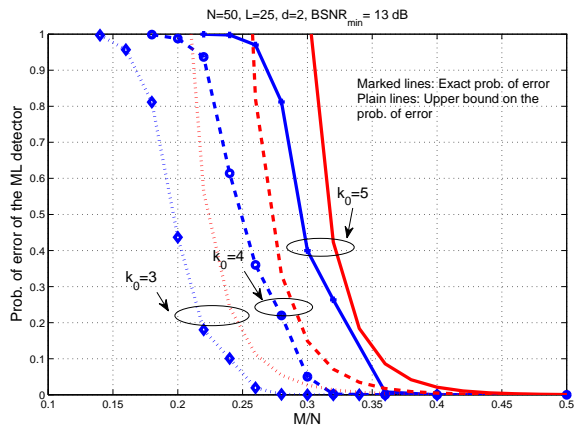}

    }
    \subfigure[$N=L=50$, $d=1$, $CSNR_{\min}=10 dB$ ]
    {
        \includegraphics[width=3.50in]{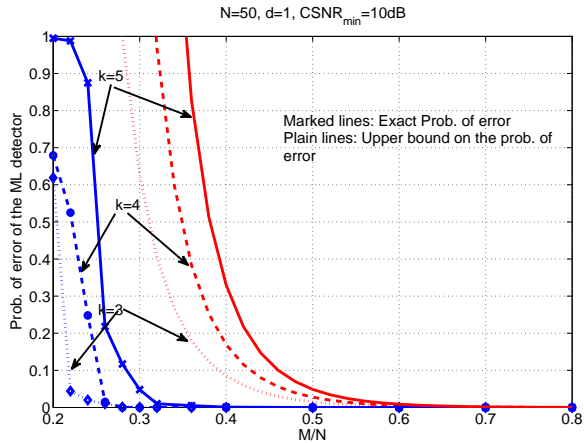}

    }
    \caption{Exact probability of error and the derived upper bound on the probability of error of  the ML recovery  for  block sparsity pattern recovery }
    \label{fig_0}
\end{figure}

 Results in Figures \ref{fig_0} and \ref{fig_1},  are based on the  special structure as considered in  (\ref{subspace_addition}) for subspaces  leading to block sparsity and the sampling operator is assumed to be a random matrix in which elements are drawn from a Gaussian ensemble with mean zero and variance $1$. Further, we let $N\times N$ matrix $\mathbf V$ be the standard canonical basis.  In Fig \ref{fig_0} (a), the exact  probability of error of the ML estimation  (obtained via simulation) and the upper bound on the probability of error derived in (\ref{Pe_ML_random}) vs $M/N$  are shown. In the block sparsity model, we let $N=50$, $d=2$, $L=25$, $BSNR_{\min} =13 dB$ and three different plots correspond to $k_0=3,4,5$. In Fig. \ref{fig_0} (b), we let $d=1$ (i.e. the standard sparsity model)  so that the upper bound on the  probability of error reduces to  (\ref{Pe_ML_random_standard}). We also let $CSNR_{\min} =10 dB$ and different curves correspond to different values of $k$ in Fig.\ref{fig_0} (b).   The exact probability of error of the ML estimation is obtained via Monte Carlo simulations with $10^5$ runs. In the upper bounds (\ref{Pe_ML_random}) and (\ref{Pe_ML_random_standard}), we let $\eta_0 = 1/4$. It can be seen from Fig. \ref{fig_0}(a) and \ref{fig_0}(b)  that the derived upper bound on the probability of error is fairly a tight bound on the exact  probability of error especially as  $M/N$ increases and the tightness is more significant in Fig. \ref{fig_0}(a). It should be  noted that for $d=2$, we have $k=k_0 d$, thus the total number of non zero coefficients is larger in Fig. \ref{fig_0}(a) than that with $d=1$ in Fig. \ref{fig_0}(b). Thus, it is seen that derived upper bound becomes tighter as $k$ increases.  It is also worth mentioning that the derived upper bound on the probability of error in \cite{wain2} with the standard sparsity model (as in (\ref{Pe_wain})) is  bounded away from $1$ for the selected parameter values mentioned above.

\begin{figure}
\centerline{\epsfig{figure=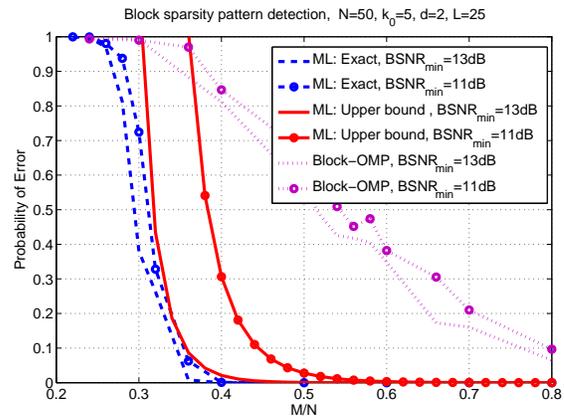,width=8.5cm}}
\caption{Performance of the ML estimation  and the B-OMP algorithm for  block sparsity pattern recovery; $L=25$, $k_0=5$, $d=2$, and thus $k=10$,    $N=50$}
  \label{fig_1}
\end{figure}

In Fig.  \ref{fig_1}, the performance of the block sparsity pattern recovery with  ML and B-OMP algorithms  is shown when  $BSNR_{\min}$ varies. In Fig.  \ref{fig_1}, we let $k_0=5$, $L=25$, $d=2$ and  $N=50$.  For B-OMP, $10^4$ runs are performed for a given projection matrix and averaged over $100$ runs. In Fig. \ref{fig_1}, the ratio between the minimum and maximum block SNR in both cases considered  is set at $1.825$.  As observed in Fig. \ref{fig_0}, from Fig. \ref{fig_1} it can be seen that the derived upper bound on the  probability of error of the ML estimation  is fairly closer to the exact probability error obtained via Monte Carlo simulations, especially as $BSNR_{\min}$ increases. Further,  for a  given finite $BSNR_{\min}$, there seems to be  a considerable  performance gap between  the B-OMP and  the ML estimation. That is the price to pay for the computational complexity of  the ML estimation  vs the computationally efficient B-OMP algorithm.

\section{Conclusion}\label{conclusion}
In this paper, we investigated the problem of subspace recovery  based on reduced dimensional samples  when the signal of interest lies in a union of subspaces.    With a given sampling operator, we derived the performance  of the optimal ML estimation  for subspace recovery  in the presence of noise  in terms of the probability of error. We further  obtained  conditions under which   asymptotically  reliable subspace recovery is  guaranteed.

We extended the analysis   to a special case of union of subspaces model which reduces to block sparsity.  When the samples are obtained via random projections, sufficient conditions required for asymptotically reliable block sparsity pattern recovery with the ML estimation  were derived. Performance gain in terms of the minimum number of samples required for asymptotically reliable  subspace recovery  with the block sparse model  was quantified compared to that with the standard sparsity model.    Our results further strengthen the existing results for  sparsity pattern recovery  with the standard  sparsity model used in CS framework with random projections. More specifically, our results for sufficient conditions for asymptotically reliable   subspace recovery  are derived based on a tighter bound on the probability of error of the ML estimation  compared to the existing results in  the literature with the standard sparsity model.
We further discussed and illustrated numerically the performance gap between the ML estimation  and the computationally tractable algorithms (e.g. B-OMP) used for subspace recovery  with the structured  union of subspaces model.

An interesting future direction will be to extend the analysis with the  single node system to  a multiple node system in distributed networks.
\section*{Appendix A}
\subsection*{Proof of Lemma \ref{lammaU_1}}
To prove Lemma \ref{lammaU_1}, we consider a similar argument to that  considered in  \cite{wain2} with certain differences as noted in the following.
As shown in \cite{wain2}, we may write,
\begin{eqnarray*}
\Delta_{ij}(\mathbf y) = ||\mathbf P_i^\bot \mathbf y||_2^2 - ||\mathbf P_i^\bot \mathbf w ||_2^2 + ||\mathbf P_i^\bot \mathbf w||_2^2 - ||\mathbf P_j^\bot \mathbf y||_2^2.
\end{eqnarray*}
For any given $\delta > 0$, define the events
\begin{eqnarray}
h_1(\delta) = \left\{ | \frac{||\mathbf P_j^\bot \mathbf y||_2^2 - ||\mathbf P_i^\bot \mathbf w ||_2^2}{\sigma_w^2}|\geq \delta \right\}\label{h_1}
\end{eqnarray}
and
\begin{eqnarray}
h_2(\delta) = \left\{  \frac{||\mathbf P_i^\bot \mathbf y||_2^2 - ||\mathbf P_i^\bot \mathbf w ||_2^2}{\sigma_w^2 }\leq 2 \delta \right\}. \label{h_2}
\end{eqnarray}
Then $Pr(\Delta_{ij}(\mathbf y) < 0)$ implies that at least one event in (\ref{h_1}) and (\ref{h_2}) is true. Based on  the union bound, we can write
\begin{eqnarray*}
Pr(\Delta_{ij}(\mathbf y) < 0) \leq Pr(h_1(\delta) ) + Pr(h_2(\delta) ).
\end{eqnarray*}

With the standard sparsity model and assuming that the sampling is performed via random projections, upper bounds on the probabilities  $Pr(h_1(\delta)) $ and $Pr(h_2(\delta)) $ are derived in \cite{wain2}. In contrast, in the following, we derive exact value for $Pr(h_2(\delta)) $ and a tighter bound for $Pr(h_1(\delta)) $ assuming that the sampling operator $A$ is known. Thus, even for  the standard sparsity model, the results presented in this paper tightens the results derived in \cite{wain2}.

We first evaluate $Pr(h_1(\delta)) $.  Let $\Delta_{ij}^1(\mathbf y) = \frac{1}{\sigma_w^2} ( ||\mathbf P_j^\bot \mathbf y||_2^2 - ||\mathbf P_i^\bot \mathbf w ||_2^2)$. Assuming the true subspace is $\mathcal S_j$, $\Delta_{ij}^1(\mathbf y)$ reduces to $\Delta_{ij}^1(\mathbf y)  = \frac{1}{\sigma_w^2} ( ||\mathbf P_j^\bot \mathbf w
||_2^2 - ||\mathbf P_i^\bot \mathbf w ||_2^2)$. As shown in \cite{wain2},  the random variable $\Delta_{ij}^1(\mathbf y)$ can be represented as $\Delta_{ij}^1(\mathbf y) = x_1 - x_2$ where $x_1$ and $x_2$ are independent and $x_1, x_2 \sim \mathcal X_{l}^2$ where $l$ is the cardinality of the set  $\mathcal W_{j\setminus i}$ as defined before.  With these notations, we can write
\begin{eqnarray*}
Pr(h_1(\delta)) &=& Pr (|x_1-x_2| \geq \delta)\nonumber \\
&=&  Pr ((x_1-x_2) \geq  \delta) + Pr ((x_1-x_2) < -\delta ).
\end{eqnarray*}
The pdf of the random variable $w = x_1 -x_2$ is symmetric around zero and thus we have,
\begin{eqnarray*}
Pr(h_1(\delta)) = 2 Pr ((x_1-x_2) \geq \delta).
\end{eqnarray*}

\begin{proposition}
When $x_1 \sim \mathcal X_l^2$ and $x_2 \sim \mathcal X_l^2$, the random variable $w = x_1 - x_2$ has the following pdf:
\begin{eqnarray}
&~&f_w(w) \nonumber\\
&=& \left\{
\begin{array}{ccc}
f_w^+(w) = \frac{w^{\frac{l}{2} - \frac{1}{2}}}{\sqrt \pi 2^ l \Gamma(l/2)}  K_{1/2 - l/2 } \left(\frac{w}{2}\right); ~ &  if ~ w \geq 0\\
f_w^-(w) = \frac{(-w)^{\frac{l}{2} - \frac{1}{2}}}{\sqrt \pi 2^ l \Gamma(l/2)}  K_{1/2 - l/2 } \left(\frac{-w}{2}\right); ~ &  if ~ w < 0
\end{array}\right.\label{f_w}
\end{eqnarray}
where $ K_\nu (x)$ is the modified Bessel function.
\end{proposition}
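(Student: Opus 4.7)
The plan is to compute the density of $w=x_1-x_2$ by direct convolution of the two chi-squared densities and then recognize the resulting integral as the standard integral representation of the modified Bessel function $K_\nu$. Since $x_1$ and $x_2$ are i.i.d., the density of $w$ is symmetric about $0$, so it suffices to establish the formula for $w\ge 0$ and obtain the $w<0$ branch by reflection, using $K_{-\nu}=K_\nu$ to identify $K_{1/2-l/2}$ with $K_{l/2-1/2}$ as in the statement.

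First, I would write the convolution
\begin{equation*}
f_w(w)=\int_0^\infty f_{x_1}(w+t)\,f_{x_2}(t)\,dt,\qquad w\ge 0,
\end{equation*}
and substitute the central chi-squared pdf $f_x(x)=\frac{1}{2^{l/2}\Gamma(l/2)}x^{l/2-1}e^{-x/2}$. After pulling out the $w$-dependent factor, this reduces to
\begin{equation*}
f_w(w)=\frac{e^{-w/2}}{2^{l}\Gamma(l/2)^2}\int_0^\infty t^{l/2-1}(t+w)^{l/2-1}e^{-t}\,dt.
\end{equation*}

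Next, I would apply the change of variable $u=1+2t/w$ (so $t=\tfrac{w}{2}(u-1)$, $t+w=\tfrac{w}{2}(u+1)$, $u^2-1=\tfrac{4t(t+w)}{w^2}$) to recast the integral in the form
\begin{equation*}
\int_0^\infty t^{l/2-1}(t+w)^{l/2-1}e^{-t}\,dt=\frac{w^{l/2-1/2}}{2^{l-1}}e^{w/2}\int_1^\infty e^{-uw/2}(u^2-1)^{l/2-1}\,du \cdot C,
\end{equation*}
where $C$ collects straightforward constants. At this point I would invoke the classical representation
\begin{equation*}
K_\nu(z)=\frac{\sqrt\pi}{\Gamma(\nu+1/2)}\left(\frac{z}{2}\right)^\nu\int_1^\infty e^{-zu}(u^2-1)^{\nu-1/2}\,du,\qquad \operatorname{Re}(\nu)>-1/2,
\end{equation*}
with $\nu=l/2-1/2$ and $z=w/2$. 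Substituting back, the factors $e^{w/2}$ and $e^{-w/2}$ cancel and one copy of $\Gamma(l/2)$ is produced, leaving exactly the claimed expression $f_w^+(w)=\frac{w^{l/2-1/2}}{\sqrt\pi\, 2^l\Gamma(l/2)}\,K_{l/2-1/2}(w/2)$. The $w<0$ branch follows from the symmetry argument noted above.

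The only delicate step is the bookkeeping in the substitution, where the factors of $2$ in $\left(\tfrac{z}{2}\right)^\nu$, in $(u^2-1)^{\nu-1/2}$, and in the Jacobian must be reconciled with the $2^l$ in the prefactor; this is purely mechanical but easy to miscount. A cleaner alternative, which I would use as a sanity check, is to compute the moment generating function $M_w(s)=(1-4s^2)^{-l/2}$ and invert it using the known transform pair for variance-gamma (Bessel function) distributions; this directly yields the same density, with $K_{l/2-1/2}$ appearing as the Fourier inverse of $(1+4t^2)^{-l/2}$.
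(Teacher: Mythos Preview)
Your proposal is correct and follows essentially the same approach as the paper: both set up the convolution integral $\frac{e^{-w/2}}{2^{l}\Gamma(l/2)^2}\int_0^\infty t^{l/2-1}(t+w)^{l/2-1}e^{-t}\,dt$ and then identify it with a modified Bessel function. The only cosmetic differences are that the paper quotes a Gradshteyn--Ryzhik table entry for this integral directly (and handles $w<0$ via a second table entry) whereas you perform the substitution $u=1+2t/w$ to reach the standard $K_\nu$ representation and use the symmetry of $w$ for the negative branch; both routes are equivalent and your symmetry argument is in fact slightly cleaner.
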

\begin{proof}
Since $x_1$ and $x_2$ are independent, the pdf of $w = x_1 - x_2 $ is given by \cite{Papoulis1}
\begin{eqnarray*}
f_w(w) = \left\{
\begin{array}{ccc}
\int_0^{\infty} f_{x_1}( w + x_2) f_{x_2} (x_2) d x_2; ~ if~ w \geq 0\\
\int_{-w}^{\infty} f_{x_1}( w + x_2) f_{x_2} (x_2) d x_2; ~ if~ w < 0
\end{array}\right.
\end{eqnarray*}
First consider the case where $w > 0$. Then
\begin{eqnarray*}
f_w^+(w) &=& \int_0^{\infty} \frac{(w+x_2)^{l/2-1} e^{-(w + x_2)/2}}{2^{l/2}\Gamma(l/2)}  \frac{x_2^{l/2-1} e^{- x_2/2}}{2^{l/2}\Gamma(l/2)} dx_2\nonumber\\
&=& \frac{e^{-w/2}}{2^l (\Gamma(l/2))^2} \int_0^{\infty} x_2^{l/2-1} (w+x_2)^{l/2 -1} e^{-x_2} dx_2\nonumber\\
&=& \frac{e^{-w/2}}{2^l (\Gamma(l/2))^2} \frac{1}{\sqrt \pi} w^{l/2-1/2} e^{w/2} \Gamma(l/2) K_{1/2-l/2}(w/2)\nonumber\\
&=& \frac{w^{l/2-1/2} K_{1/2-l/2}(w/2)}  {\sqrt\pi 2^l \Gamma(l/2)}
\end{eqnarray*}
where $ K_\nu (x)$ is the modified Bessel function and the third equality is obtained using the integral result $\int_0^{\infty} x^{\nu-1} (x+\beta)^{\nu-1} e^{-\mu x} dx = \frac{1}{\sqrt \pi} \left(\frac{\beta}{\mu}\right)^{\nu-1/2} e^{\beta \mu/2} \Gamma(\nu) K_{1/2 - \nu} \left(\frac{\beta \mu}{2}\right)$ for $\mu,\nu >0$ in \cite[p.~348]{Gradshteyn1}.

When $w < 0$, we have,
\begin{eqnarray}
f_w^-(w) =  \frac{e^{-w/2}}{2^l (\Gamma(l/2))^2} \int_{-w}^{\infty} x_2^{l/2-1} (w+x_2)^{l/2 -1} e^{-x_2} dx_2. \label{fw_minus}
\end{eqnarray}
Letting $z=-w$ where $z > 0$, (\ref{fw_minus}) can be rewritten as,
\begin{eqnarray}
f_w^-(w) =  \frac{e^{z/2}}{2^l (\Gamma(l/2))^2} \int_{z}^{\infty} x_2^{l/2-1} (x_2-z)^{l/2 -1} e^{-x_2} dx_2.
\end{eqnarray}
Using the integral result, $\int_u^{\infty} x^{\nu-1} (x-u)^{\nu-1} e^{-\mu x} dx = \frac{1}{\sqrt \pi} \left(\frac{u}{\mu}\right)^{\nu-1/2} e^{-\mu u /2} \Gamma(\nu) K_{\nu - 1/2 } \left(\frac{ \mu u}{2}\right)$ in \cite[p.~347]{Gradshteyn1} and the relation $K_{\nu}(x) = K_{-\nu}(x)$, we get $f_w^-(w)$ as in  (\ref{f_w}), completing the proof.
\end{proof}

\begin{proposition}
For $\delta > 0$, the probability $Pr(w > \delta)$ is given by,
\begin{eqnarray*}
Pr(w > \delta) &\leq& \frac{\sqrt 2}{2^{l+1}\Gamma(l/2)} \delta^{l/2-1/2} K_{l/2-1/2}(\delta/2)
\end{eqnarray*}
where $ K_\nu (x)$ is the modified Bessel function, and $\Gamma(.)$ is the Gamma function.
\end{proposition}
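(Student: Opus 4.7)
The strategy is to reduce the tail probability to an integral of a Bessel-function-weighted power and then bound that integral by a constant multiple of its value at the boundary. Starting from the pdf established in the preceding proposition,
\begin{equation*}
Pr(w > \delta) = \frac{1}{\sqrt\pi\, 2^l\, \Gamma(l/2)} \int_\delta^\infty w^{l/2-1/2}\, K_{l/2-1/2}(w/2)\, dw,
\end{equation*}
so the problem reduces to bounding $\int_\delta^\infty w^{l/2-1/2} K_{l/2-1/2}(w/2)\, dw$ by an expression of the form $c_l\, \delta^{l/2-1/2} K_{l/2-1/2}(\delta/2)$ with $c_l$ chosen so that, once combined with the prefactor above, the bound collapses to $\frac{\sqrt 2}{2^{l+1}\Gamma(l/2)}\, \delta^{l/2-1/2} K_{l/2-1/2}(\delta/2)$.

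My primary approach would be to invoke the integral representation $K_\nu(x) = \frac{\sqrt\pi\, (x/2)^\nu}{\Gamma(\nu + 1/2)} \int_1^\infty (t^2 - 1)^{\nu - 1/2} e^{-xt}\, dt$ (valid for $\nu > -1/2$) with $\nu = l/2 - 1/2$ and $x = w/2$, then swap the order of integration by Fubini. The inner $w$-integral becomes $\int_\delta^\infty w^{l-1} e^{-wt/2}\, dw = (2/t)^l\, \Gamma(l, \delta t/2)$, an upper incomplete gamma function. Applying the elementary tail estimate $\Gamma(l, x) \leq c_l\, x^{l-1} e^{-x}$ (obtainable by a short integration-by-parts argument) allows the remaining $t$-integral to be written, up to bounded factors, as $\int_1^\infty (t^2-1)^{l/2-1} e^{-\delta t/2}\, dt$. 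Recognizing this via the \emph{same} integral representation in reverse then yields $K_{l/2-1/2}(\delta/2)$ with exactly the Bessel-function powers of $\delta$.

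An alternative that avoids the incomplete gamma function entirely is a monotonicity argument: the derivative identity $\frac{d}{dw}\bigl[w^\nu K_\nu(w/2)\bigr] = -\tfrac12\, w^\nu\, K_{\nu-1}(w/2)$ (combined with $K_\mu > 0$ for $\mu \in \mathbb{R}$) shows that $w \mapsto w^{l/2-1/2} K_{l/2-1/2}(w/2)$ is decreasing on $(0,\infty)$. Combined with the well-known asymptotic $K_\nu(z) \sim \sqrt{\pi/(2z)}\, e^{-z}$, one can majorize the integrand on $[\delta, \infty)$ by $\delta^{l/2-1/2} K_{l/2-1/2}(\delta/2) \cdot e^{-(w-\delta)/2}$, whose integral is simply $2\, \delta^{l/2-1/2} K_{l/2-1/2}(\delta/2)$.

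The principal obstacle I expect is book-keeping: each application of the integral representation contributes a factor of $\sqrt\pi/\Gamma(l/2)$ together with a power of $1/4$, and the tail bound on $\Gamma(l,\cdot)$ (or the exponential majorant) contributes its own multiplicative constant. Producing exactly the stated prefactor $\sqrt 2 / (2^{l+1}\Gamma(l/2))$ requires chasing all of these factors carefully through the computation, and if the first route proves too lossy, the monotonicity route (which is tight in the limit $\nu = 1/2$, i.e.\ $l = 2$, where $K_{1/2}$ has the closed form $\sqrt{\pi/(2x)}\,e^{-x}$) provides a natural sanity check on the constant.
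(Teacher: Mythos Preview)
Your two routes are both genuinely different from the paper's, and unfortunately both have real gaps. In the primary route, the ``elementary tail estimate'' $\Gamma(l,x)\le c_l\,x^{l-1}e^{-x}$ is false uniformly in $x>0$: for integer $l\ge 2$ one has $\Gamma(l,x)=(l-1)!\,e^{-x}\sum_{k=0}^{l-1}x^k/k!$, so $\Gamma(l,x)/(x^{l-1}e^{-x})\to\infty$ as $x\to 0^+$. Since after Fubini $x=\delta t/2$ with $t\in[1,\infty)$, the range of $x$ extends down to $\delta/2$, and the proposition must hold for every $\delta>0$; no $\delta$-independent constant $c_l$ exists, so the recombination into $K_{l/2-1/2}(\delta/2)$ with the stated prefactor cannot be completed this way. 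In the alternative route, the majorant $w^{\nu}K_\nu(w/2)\le \delta^{\nu}K_\nu(\delta/2)\,e^{-(w-\delta)/2}$ (with $\nu=l/2-1/2$) is equivalent, via your derivative identity, to $K_{\nu-1}(z)/K_\nu(z)\ge 1$ for all $z>0$; since $K_\mu(z)$ is increasing in $|\mu|$, this holds iff $|\nu-1|\ge|\nu|$, i.e.\ $\nu\le 1/2$, i.e.\ $l\le 2$. For $l\ge 3$ the majorant fails and the integral bound of $2$ is not valid.

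The paper sidesteps both issues by choosing a different integral representation, namely $K_\nu(w/2)=\tfrac{1}{2}w^{\nu}\int_0^\infty e^{-\frac14(t+w^2/t)}t^{-\nu-1}\,dt$ (Gradshteyn--Ryzhik). With this choice the inner $w$-integral after Fubini is a Gaussian tail, $\int_\delta^\infty e^{-w^2/(4t)}\,dw \propto Q(\delta/\sqrt{2t})$, to which the Chernoff bound $Q(x)\le \tfrac12 e^{-x^2/2}$ applies for \emph{all} $x>0$. The surviving $t$-integral then has the exact form $\int_0^\infty t^{\alpha-1}e^{-\gamma t-\beta/t}\,dt=2(\beta/\gamma)^{\alpha/2}K_\alpha(2\sqrt{\beta\gamma})$, which returns $K_{l/2-1/2}(\delta/2)$ with precisely the claimed prefactor and no $l$-dependent losses. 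The moral is that the representation in which the tail variable $w$ appears \emph{quadratically} in the exponent is the one that makes the single inequality sharp enough to close.
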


\begin{proof}
Based on (\ref{f_w}), we have
\begin{eqnarray}
Pr(w > \delta) &=&  \int_{\delta}^{\infty} f_w^+(w) dw \nonumber\\
&=& \int_{\delta}^{\infty} \frac{w^{l/2-1/2}  K_{1/2-l/2}(w/2)}  {\sqrt\pi 2^l \Gamma(l/2)} dw. \label{p_w}
\end{eqnarray}
Using the equivalent integral representation of $ K_\nu(az) = \frac{z^\nu}{2} \int_0^{\infty} e^{-\frac{a}{2} \left(t+\frac{z^2}{t}\right)} t^{-\nu -1} dt$ \cite[p.~917]{Gradshteyn1}, we can write the integral in (\ref{p_w}) as,
\begin{eqnarray}
&~&Pr(w > \delta) \nonumber\\
&=& \frac{1}{\sqrt\pi 2^{l+1} \Gamma(l/2)} \int_{\delta}^\infty \int_0^\infty  e^{-\frac{1}{4} \left(t+\frac{w^2}{t}\right)} t^{l/2 -3/2} dt dw. \label{p_w2}
\end{eqnarray}
Since  $\int_{\delta}^{\infty} e^{-\frac{w^2}{4t}} dw = \sqrt{2\pi} Q\left(\frac{\delta}{\sqrt{2t}}\right)$, (\ref{p_w2}) reduces  to,
\begin{eqnarray}
Pr(w > \delta)  &=& \frac{\sqrt 2}{2^{l+1} \Gamma(l/2)} \int_{0}^\infty e^{-t/4} t^{l/2  -3/2} Q\left(\frac{\delta}{\sqrt{2t}}\right) dt\nonumber
\end{eqnarray}
\begin{eqnarray}
&\leq & \frac{\sqrt 2}{2^{l+2} \Gamma(l/2)} \int_{0}^\infty t^{l/2  -3/2} e^{-t/4-\frac{\delta^2}{4t}}  dt\label{bound1}\\
&=& \frac{\sqrt 2}{2^{l+1}\Gamma(l/2)} \delta^{l/2-1/2} K_{l/2-1/2}(\delta/2)\label{bound2}
\end{eqnarray}
where we used the inequality $Q(x) \leq \frac{1}{2}e^{-\frac{x^2}{2}}$ for $x>0$, and the relation, $\int_0^{\infty} x^{\nu-1} e^{-\beta/x - \gamma x } dx = 2\left(\frac{\beta}{\gamma}\right)^{\nu/2} K_{\nu} (2\sqrt{\beta\gamma})$ for $\beta > 0$ and $\gamma >0$ \cite[p.~368]{Gradshteyn1} while obtaining (\ref{bound1}) and (\ref{bound2}), respectively, which completes the proof.
\end{proof}
 Then, we have
\begin{eqnarray}
Pr(h_1(\delta)) = \frac{\sqrt 2}{2^{l}\Gamma(l/2)} \delta^{l/2-1/2} K_{l/2-1/2}(\delta/2).
\end{eqnarray}

Next we compute the quantity $Pr (h_2(\delta))$. Let $\Delta_{ij}^2(\mathbf y) = \frac{1}{\sigma_w^2} (||\mathbf P_i^\bot \mathbf y||_2^2 - ||\mathbf P_i^\bot \mathbf w ||_2^2)$.
Then we have,
\begin{eqnarray*}
\Delta_{ij}^2(\mathbf y)= \frac{1}{\sigma_w^2} (||\mathbf P_i^\bot  {\mathbf B}_{j\setminus i}{\mathbf c}_{j\setminus i}||_2^2 + 2 \mathbf w^T \mathbf P_i^\bot  {\mathbf B}_{j\setminus i}{\mathbf c}_{j\setminus i}).
\end{eqnarray*}
Since $\mathbf w \sim \mathcal N(\mathbf 0, \sigma_w^2 \mathbf I_M)$, $\Delta_{ij}^2(\mathbf y)$ is a Gaussian random variable with  pdf,
\begin{eqnarray*}
\Delta_{ij}^2(\mathbf y) \sim \mathcal N \left( \frac{1}{\sigma_w^2} ||\mathbf P_i^\bot  {\mathbf B}_{j\setminus i}{\mathbf c}_{j\setminus i}||_2^2, \frac{4}{\sigma_w^2} ||\mathbf P_i^\bot  {\mathbf B}_{j\setminus i} {\mathbf c}_{j\setminus i}||_2^2\right).
\end{eqnarray*}
Thus,
\begin{eqnarray*}
Pr (h_2(\delta) ) &=& Pr \left({\Delta_{ij}^2(\mathbf y)} \leq 2 \delta \right)\nonumber\\
&=&1 - Q\left(\frac{2 \delta  - \frac{1}{\sigma_w^2} ||\mathbf P_i^\bot  {\mathbf B}_{j\setminus i}{\mathbf c}_{j\setminus i}||_2^2}{ \frac{2}{\sigma_w} ||\mathbf P_i^\bot  {\mathbf B}_{j\setminus i}{\mathbf c}_{j\setminus i}||_2}\right)\nonumber\\
&=& 1 - Q\left(\frac{2 \delta -\lambda_{j\setminus i}}{  2 \sqrt{\lambda_{j\setminus i}}}\right).
\end{eqnarray*}
Since it is desired to  control $\delta$ such that $Pr (h_2(\delta) ) \leq 1/2$, we select $\delta^* = \eta_0 \lambda_{j\setminus i}$ where $\eta_0 < \frac{1}{2}$. With this choice $Pr (h_2(\delta) )$ reduces to,
\begin{eqnarray*}
Pr (h_2(\delta) )  = Q\left(\frac{1}{2}\sqrt{\lambda_{j\setminus i}} (1 - 2 \eta_0 )\right)
\end{eqnarray*}
where we used the relation $1- Q(-x) = Q(x)$ for $x > 0$, while  $Pr(h_1(\delta))$ reduces to,
\begin{eqnarray}
Pr(h_1(\delta)) = \frac{\sqrt 2}{2^{l}\Gamma(l/2)} (\eta_0 \lambda_{j\setminus i})^{l/2-1/2} K_{l/2-1/2}(\eta_0 \lambda_{j\setminus i}/2).
\end{eqnarray}

\section*{Appendix B}
\subsection*{Proof of Theorem \ref{lemma_M_conditions}}
To obtain  conditions under which the probability of error bound in  (\ref{pe_ML}) asymptotically vanishes, we rely on the following corollary.

\begin{corollary}
Let $T_0(l) $ and $\alpha_{\min,l}^2$ be as defined in Subsection \ref{eval_lamda}. The probability of error of the ML estimation  in (\ref{pe_ML}) is further upper bounded by
\begin{eqnarray}
P_e \leq  \sum_{l=1}^{k} T_0(l) \left( \frac{1}{2} e^{ -\frac{1}{8} (1 - 2\eta_0 )^2 (M-k)\alpha_{\min,l}^2} +  \phi_{l}\right)\label{Pe_ML_bound}
\end{eqnarray}
where
\begin{eqnarray}
\phi_{l} =
\frac{ \sqrt{2\pi}}{4\Gamma(l/2)} \left(\frac{1}{4} \eta_0 (M-k)\alpha_{\min,l}^2\right)^{l/2 -1}e^{-\frac{1}{2} \eta_0 (M-k)\alpha_{\min,l}^2}
\label{phi_s_bound}
\end{eqnarray}
when $(M-k) \alpha_{\min,l}^2 >> (l/2 -1/2)$ for all $l=1,2,\cdots,k$ and  $0<\eta_0 < 1/2$.
\end{corollary}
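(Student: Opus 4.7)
The approach is to bound the two terms in (\ref{pe_ML}) separately: apply a Chernoff-type tail bound to the Gaussian $Q$-function to produce the first exponential term, and invoke the large-argument asymptotic of the modified Bessel function $K_\nu(\cdot)$ to produce the $\phi_l$ term. The stated condition $(M-k)\alpha_{\min,l}^2 \gg (l/2-1/2)$ is exactly what is needed for the Bessel asymptotic to be a legitimate bound.

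First, I would apply the standard inequality $Q(x) \leq \tfrac{1}{2}e^{-x^2/2}$ for $x>0$, with $x = \tfrac{1}{2}(1-2\eta_0)\sqrt{(M-k)\alpha_{\min,l}^2}$. Computing $x^2/2 = \tfrac{1}{8}(1-2\eta_0)^2 (M-k)\alpha_{\min,l}^2$ immediately yields the first term $\tfrac{1}{2}\exp\{-\tfrac{1}{8}(1-2\eta_0)^2(M-k)\alpha_{\min,l}^2\}$ inside the sum in (\ref{Pe_ML_bound}).

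Second, for the Bessel term I would use the classical asymptotic expansion
\begin{eqnarray*}
K_\nu(x) = \sqrt{\frac{\pi}{2x}}\, e^{-x}\left[1 + O\!\left(\tfrac{1}{x}\right)\right], \qquad x\to\infty,
\end{eqnarray*}
valid whenever $x \gg \nu$. Setting $\nu = l/2 - 1/2$ and $x = \eta_0(M-k)\alpha_{\min,l}^2/2$, the hypothesis $(M-k)\alpha_{\min,l}^2 \gg (l/2-1/2)$ places us squarely in this regime, so the leading term provides a valid upper bound (possibly with a factor arbitrarily close to $1$). Substituting into
\begin{eqnarray*}
\Psi(l,z) = \frac{\sqrt{2}}{2^l \Gamma(l/2)}(\eta_0 z)^{l/2-1/2} K_{l/2-1/2}(\eta_0 z/2)
\end{eqnarray*}
with $z = (M-k)\alpha_{\min,l}^2$, the square-root prefactor $\sqrt{\pi/(\eta_0 z)}$ combines with $(\eta_0 z)^{l/2-1/2}$ to give $\sqrt{\pi}(\eta_0 z)^{l/2-1}$, and the powers of $2$ collapse as $\sqrt{2}\cdot\sqrt{\pi/2}/2^l = \sqrt{2\pi}/2^{l+1}$. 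Absorbing $4^{l/2-1} = 2^{l-2}$ into the $(\eta_0 z/4)^{l/2-1}$ form yields precisely $\phi_l$ in (\ref{phi_s_bound}).

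The main obstacle is the rigorous conversion of the Bessel asymptotic into an upper bound rather than merely an equivalence; in the regime $x\gg\nu$ the remainder in the asymptotic series is controlled and has a definite sign for the leading correction, so absorbing a harmless constant (or tightening the ``$\gg$'' threshold) ensures the bound. Once both pieces are in place, substituting them into (\ref{pe_ML}) term by term yields (\ref{Pe_ML_bound}), since the double sum in (\ref{pe_ML}) has already been organized into $\sum_{l=1}^k T_0(l)(\cdot)$ via the definitions of $T_0(l)$ and $\alpha_{\min,l}^2$ in (\ref{T_0}) and (\ref{alpha_min}).
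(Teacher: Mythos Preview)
Your proposal is correct and follows exactly the same route as the paper: the Chernoff bound $Q(x)\leq\tfrac{1}{2}e^{-x^2/2}$ for the Gaussian tail term and the large-argument asymptotic $K_\nu(z)\approx\sqrt{\pi/(2z)}\,e^{-z}$ (valid when $\nu\ll z$) for the Bessel term, with the same algebraic simplification to reach $\phi_l$. Your remark about the asymptotic being an approximation rather than a strict inequality is well taken and is in fact handled in the paper only via the ``$\gg$'' hypothesis, just as you suggest.
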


\begin{proof}
Using the Chernoff bound for the $Q$ function where $Q(x) \leq \frac{1}{2} e^{-\frac{x^2}{2}}$, we can upper bound the  term $Q \left(\frac{1}{2}(1 - 2\eta_0 )\sqrt{(M-k) \alpha_{\min,l}^2} \right)$ as,
\begin{eqnarray*}
Q \left(\frac{1}{2}(1 - 2\eta_0 )\sqrt{(M-k) \alpha_{\min,l}^2} \right) &\leq& \frac{1}{2} e^{ -\frac{1}{8} (1 - 2\eta_0 )^2 (M-k)\alpha_{\min,l}^2}
\end{eqnarray*}
for $\eta_0 <\frac{1}{2}$.

To obtain (\ref{phi_s_bound}) we used the relation $K_{\nu}(z) \approx \sqrt{\frac{\pi}{2 z}} e^{-z}$ when $\nu << z$,  completing the proof.
\end{proof}
It is further noted that when $k$ is fairly small and $\alpha_{\min,l}^2$ is sufficiently large, the condition required for (\ref{phi_s_bound}) is often satisfied.
We consider the conditions under which the each term in (\ref{Pe_ML_bound}) goes to  $0$ asymptotically, equivalently logarithm of each term $\rightarrow -\infty$. First consider the first term in the summation  in (\ref{Pe_ML_bound}) for which the logarithm gives,
\begin{eqnarray*}
&~&\log T_0(l) + \log(1/2) - \frac{1}{8}(1-2\eta_0)^2 (M-k) \alpha_{\min,l}^2 \nonumber\\
&\leq& \underset{l}{\max}\left\{ \log(T_0(l))  +  \log(1/2)\right.\nonumber\\
 &-&\left. \frac{1}{8}(1-2\eta_0)^2 (M-k)~   \{\alpha_{\min,l}^2 \}\right\}
\rightarrow -\infty
\end{eqnarray*}
as $(M-k)\rightarrow \infty$ when $M > k + M_1$ where $M_1=\underset{l=1,\cdots,k}{\max} \left\{\frac{8}{(1-2\eta_0)^2\alpha_{\min,l}^2}
 \left\{\log(T_0(l)) + \log(1/2)\right\}\right\}$.
Considering the second term in (\ref{Pe_ML_bound}),
let
\begin{eqnarray}
\Pi_1 & =& \log T_0(l) + \log\left(\frac{b_0}{\Gamma(l/2)}\right) \nonumber\\
&+& (l/2-1) \log\left(\frac{1}{4}\eta_0 (M-k) \alpha_{\min,l}^2\right)\nonumber\\
 &-& \frac{1}{2}\eta_0 (M-k) \alpha_{\min,l}^2\label{bound_M_4}
\end{eqnarray}
where $b_0 = \frac{\sqrt{2\pi}}{4}$.
When $\frac{1}{4}\eta_0 (M-k) \alpha_{\min,l}^2$ is sufficiently large, we can find $0 < q_0 < \frac{1}{(k/2 -1)}$ such that $\log\left(\frac{1}{4}\eta_0 (M-k) \alpha_{\min,l}^2\right) < q_0 \frac{1}{2}\eta_0 (M-k) \alpha_{\min,l}^2$. Then (\ref{bound_M_4}) is  upper bounded by
\begin{eqnarray}
\Pi_1 &\leq& \underset{l=1,\cdots,k}{\max}\left\{ \log(T_0(l))  + \log\left(\frac{b_0}{\Gamma(3/2)}\right) \right.\nonumber\\
&-&\left. \left(\frac{1}{2}\eta_0 (M-k) \alpha_{\min,l}^2\right) ( 1 -  q_0 (k/2 -1))\right\}= \Pi_2\label{bound_M_4_2}
\end{eqnarray}
where $0 < q_0 < \frac{1}{(k/2 -1)}$. We can write $q_0$ in the form of  $ q_0 = \frac{1}{2(k/2 + r_0 -1)}$ for some $r_0 > 0$. Thus, (\ref{bound_M_4_2}) can be rewritten as
\begin{eqnarray*}
\Pi_2 &=& \underset{l=1,\cdots,k}{\max}\left\{ \log(T_0(l)) + \log\left(\frac{2b_0}{\sqrt\pi}\right) \right. \nonumber\\
&-& \left. \left(\frac{1}{2}\eta_0 (M-k) \alpha_{\min,l}^2\right) \frac{r_0}{r_0 + k/2 -1}\right \}\rightarrow - \infty
\end{eqnarray*}
as $(M-k)\rightarrow \infty$ when $M > k+ M_2$ where $M_2 = \underset{l=1,\cdots,k}{\max} \left\{\frac{2(k/2 + r_0 -1)}{r_0\eta_0   \alpha_{\min,l}^2}
 \left\{\log(T_0(l)) + \log\left(\frac{2b_0}{\sqrt\pi}\right)\right\} \right\}$,  $0<\eta_0 < 1/2 $, $b_0 = \frac{\sqrt{2\pi}}{4}$,  and $r_0 >0$.

\section*{Appendix C}
\subsection*{Proof of Proposition  \ref{prop_1}}
We rewrite $ \lambda_{j\setminus i} =   \frac{1}{\sigma_w^2} || {\mathbf P}_i^\bot {\mathbf B}_{j\setminus i}{\mathbf c}_{j\setminus i}||_2^2 $.
The $t$-th element of the vector $ {\mathbf B}_{j\setminus i}{\mathbf c}_{j\setminus i}$ can be written as $\langle \mathbf a_t, \underset{{m\in \mathcal W_{j\setminus i}}}{\sum} \mathbf v_{jm}  {c}_{j}(m)  \rangle$  where $\mathbf a_t$'s are row vectors of $\mathbf A$ for $t=0,1,\cdots, M-1$. Assuming that the  elements of $\mathbf A$ are  independent Gaussian with mean zero and variance 1, it can be easily seen that  $\langle \mathbf a_t, \underset{{m\in \mathcal W_{j\setminus i}}}{\sum} \mathbf v_{jm}  {c}_{j}(m)  \rangle$ is a realization of a  Gaussian random variable  with mean zero and  variance $||\underset{{m\in \mathcal W_{j\setminus i}}}{\sum} \mathbf v_{jm}  {c}_{j}(m)  ||_2^2$. Further, the elements of $ {\mathbf B}_{j\setminus i}{\mathbf c}_{j\setminus i}$ are independent of each other since $\mathbf a_t$'s are independent for $t=0,1,\cdots, M-1$. Thus,  the random vector $ {\mathbf B}_{j\setminus i}{\mathbf c}_{j\setminus i} \sim \mathcal N (\mathbf 0, ||\underset{{m\in \mathcal W_{j\setminus i}}}{\sum} \mathbf v_{jm}  {c}_{j}(m)  ||_2^2\mathbf I_M)$.  With given realizations, consider again the transformation $\mathbf Q_i^T {\mathbf B}_{j\setminus i}{\mathbf c}_{j\setminus i}$ where $\mathbf Q_i$ is the unitary matrix with eigenvectors of $ {\mathbf P}_i^\bot$. Since the elements in ${\mathbf B}_{j\setminus i}{\mathbf c}_{j\setminus i}$ are independent and identically distributed (iid), the unitary transformation  does not change the distribution of ${\mathbf B}_{j\setminus i}{\mathbf c}_{j\setminus i}$.  Then $||  {\mathbf P}_i^{\bot}  {\mathbf B}_{j\setminus i}{\mathbf c}_{j\setminus i}||_2^2 = ||\Lambda_i \mathbf Q_i^T  {\mathbf B}_{j\setminus i}{\mathbf c}_{j\setminus i}||_2^2$   is a sum of $M-k $ iid random variables. Thus  when $(M-k)$ is sufficiently large, invoking the law of large numbers, we may approximate  $ ||{\mathbf P}_i^\bot {\mathbf B}_{j\setminus i}{\mathbf c}_{j\setminus i}||_2^2 $ $   \rightarrow (M- k) ||\underset{{m\in \mathcal W_{j\setminus i}}}{\sum} \mathbf v_{jm}  {c}_{j}(m)  ||_2^2$ which completes the proof.

\bibliographystyle{IEEEtran}
\bibliography{IEEEabrv,bib1}
\end{document}